\documentclass[11pt]{article}
\usepackage[margin=1in]{geometry}
\usepackage{microtype}
\usepackage{amsmath,amssymb,amsthm,color}
\usepackage{booktabs,tabularx}
\usepackage[font=normalsize]{caption}
\usepackage{enumitem}
\usepackage{url}
\usepackage[hidelinks]{hyperref}
\hypersetup{
  pdftitle={The Price of Almost Navigability},
  pdfauthor={Anonymous Authors},
  pdfkeywords={nearest-neighbor search, navigability, dimension reduction, Zarankiewicz problem}
}
\usepackage{graphicx}
\usepackage{authblk}
\usepackage{xcolor}
\usepackage{fontawesome5}
\usepackage{tikz}
\usetikzlibrary{
  arrows.meta,
  positioning,
  calc,
  backgrounds,
  decorations.pathreplacing
}
\usepackage{caption}
\definecolor{geInk}{HTML}{20334A}
\definecolor{geBlue}{HTML}{2E668F}
\definecolor{geTeal}{HTML}{1B8A7A}
\definecolor{geGold}{HTML}{D9A128}
\definecolor{geCoral}{HTML}{D66853}
\definecolor{geMist}{HTML}{EAF1F5}
\definecolor{geGray}{HTML}{8794A5}

\tikzset{
  figure header/.style={
    font=\bfseries\large,
    text=black,
    align=center
  }
}

\tikzset{
  ge edge/.style={draw=geInk!68,line width=.8pt,line cap=round},
  ge faint edge/.style={draw=geInk!22,line width=.65pt,line cap=round},
  ge strong edge/.style={draw=geBlue,line width=1.8pt,line cap=round},
  ge arrow/.style={-Latex,draw=geCoral,line width=1.5pt},
  ge vertex/.style={circle,fill=geInk,draw=white,line width=.6pt,
                    minimum size=6pt,inner sep=0pt},
  ge blue vertex/.style={circle,fill=geBlue,draw=white,line width=.7pt,
                         minimum size=8pt,inner sep=0pt},
  ge teal vertex/.style={circle,fill=geTeal,draw=white,line width=.7pt,
                         minimum size=8pt,inner sep=0pt},
  ge gold vertex/.style={circle,fill=geGold,draw=white,line width=.7pt,
                         minimum size=8pt,inner sep=0pt},
  ge coral vertex/.style={circle,fill=geCoral,draw=white,line width=.7pt,
                          minimum size=8pt,inner sep=0pt},
  ge label/.style={font=\small,fill=white,inner sep=1.5pt,text=geInk},
  ge tiny label/.style={font=\scriptsize,fill=white,inner sep=1pt,text=geInk}
}

\pdfmapfile{+cm.map}
\pdfmapfile{+cmextra.map}
\pdfmapfile{+symbols.map}
\pdfmapfile{+latxfont.map}

\newcommand{\emailaddr}[1]{%
  \href{mailto:#1}{\nolinkurl{#1}}%
}

\newtheorem{theorem}{Theorem}[section]
\newtheorem{lemma}[theorem]{Lemma}

\newtheorem{corollary}[theorem]{Corollary}
\theoremstyle{definition}
\newtheorem{definition}[theorem]{Definition}

\theoremstyle{remark}

\newcommand{\eps}{\varepsilon}
\newcommand{\R}{\mathbb R}

\newcommand{\AN}{\mathsf{AN}}
\newcommand{\Zar}{\mathsf Z}
\newcommand{\ex}{\operatorname{ex}}
\newcommand{\codeg}{\operatorname{codeg}}
\newcommand{\outdeg}{\deg^{+}}

\renewenvironment{abstract}{%
  \begin{center}\bfseries Abstract\end{center}%
  \begin{quote}\normalsize
}{%
  \end{quote}
}

\definecolor{authorIconGray}{HTML}{666666}

\newcommand{\emailicon}[1]{%
  \href{mailto:#1}{%
    \textcolor{authorIconGray}{%
      \fontsize{9}{11}\selectfont
      \faIcon[regular]{envelope}%
    }%
  }%
}

\newcommand{\orcidicon}[1]{%
  \href{https://orcid.org/#1}{%
    \textcolor{authorIconGray}{%
      \fontsize{9}{11}\selectfont
      \faOrcid
    }%
  }%
}

\title{\bf The Price of Almost Navigability}

\author{%
  Tomer Waizer\,
  \texorpdfstring{%
    \emailicon{tomer.waizer@campus.technion.ac.il}\,
    \orcidicon{0009-0000-7589-5257}%
  }{}%
}

\author{%
  Yoav Danieli\,
  \texorpdfstring{%
    \emailicon{yoavd@campus.technion.ac.il}\,
    \orcidicon{0000-0001-8774-064X}%
  }{}%
}

\affil{Taub Faculty of Computer Science, Technion -- Israel Institute of Technology, Haifa, Israel}
\date{}

\hypersetup{
  pdftitle  = {The Price of Almost Navigability},
  pdfauthor = {First Author, Second Author, Third Author},
  pdfkeywords = {nearest-neighbor search, navigability,
                 dimension reduction, Zarankiewicz problem}
}

\begin{document}
\maketitle

\begin{abstract}
\textit{Navigability} is a fundamental property of graph-based search structures and plays an important role in the analysis of nearest-neighbor algorithms. Informally, a graph is navigable if, from any current point and toward any desired target, there is always an outgoing edge that moves strictly closer to that target. While this property provides a strong guarantee for greedy search, it can be inherently expensive: in the worst case, navigable graphs require $\Omega(n^{3/2})$ edges, where $n$ is the size of the dataset.

Recently, Avi and Musco introduced $(1-\eps)$-\textit{almost navigability}, a natural relaxation in which, from every current point, such a progress-making edge is required for almost all targets, while an $\eps$ fraction of targets may fail this condition \cite{avimusco2026almost}. They showed that every dataset admits such a graph with $O(n/\eps)$ edges.

In this work, we prove a matching lower bound, establishing the optimality of their construction and giving a complete characterization of the sparsity achievable by almost-navigable graphs. For most values of $\eps$, our hard instances lie in Euclidean spaces of polylogarithmic dimension, across the full worst-case range of $\eps$, dimension \(d=O(\sqrt{n}\log^{3/2} n)\) suffices. The same construction also sharpens our understanding of ordinary navigability: in the latter dimension, we exhibit datasets for which every navigable graph has $\Omega(n^{3/2})$ edges.

Our proof reveals a surprising connection between almost navigability and the classical Zarankiewicz problem of constructing dense graphs with limited pairwise neighborhood overlap. This connection lets us translate extremal graph constructions into hard geometric instances for navigation, linking two seemingly different notions of graph sparsity.
\end{abstract}

\noindent\textbf{Keywords.}
nearest-neighbor search; navigability; dimension reduction; Zarankiewicz problem.

\section{Introduction}
\label{sec:introduction}

Nearest-neighbor search is a fundamental and widely studied problem in data analysis, machine learning, and information retrieval.
Given a dataset
$P={p_1,\ldots,p_n}\subseteq\mathbb R^d$ and a query point $q\in\mathbb R^d$,
one would like to find the point of $P$ closest to $q$.
In
high dimension, exact search is subject to the familiar \emph{curse of
dimensionality}: classical indexing methods deteriorate with dimension, and
general lower bounds show strong dimension dependence
~\cite{indykMotwaniANN,weberSchekBlott1998,borodinOstrovskyRabani1999,barkolRabani2002}.
Therefore, much of the literature instead
studies approximate nearest-neighbor search.

\begin{definition}[Approximate nearest-neighbor search]
For $c\geq1$, a $c$-approximate nearest neighbor of a query
$q\in\mathbb R^d$ is a point $p\in P$ satisfying

$$
        \|p-q\|_2
        \leq
        c\min_{x\in P}\|x-q\|_2.
$$

\end{definition}

Several approaches have been developed for this problem. Locality-sensitive
hashing uses randomized hash functions that favor collisions between nearby
points~\cite{indykMotwaniANN,andoniIndykLSH,andoniPracticalLSH,lvMultiProbeLSH}.
Quantization and clustering reduce the search space by compressing or
partitioning the data~\cite{jegouProductQuantization,johnsonBillionScale,douzeFaiss},
while hierarchical methods adapt tree-based search to high-dimensional
settings~\cite{bawaLSHForest,beygelzimerCoverTrees,andoniLSHForest}.

Graph-based methods have become especially successful in practice.  HNSW,
DiskANN, NSG, and related systems are among the strongest methods on
large-scale ANN benchmarks
~\cite{aumullerANNBenchmarks,simhadriBigANN2021,simhadriBigANN2023,
malkov2020hnsw,subramanya2019diskann,fu2019nsg}.  Although their construction
rules differ, their search procedures share a local principle: from the
current point, follow edges to points that are closer to the query. 

This motivates studying properties of the graph itself. A particularly clean
one is \emph{navigability}, which is related to earlier work on decentralized
routing and small-world networks
~\cite{milgramSmallWorld,traversMilgram,kleinberg2000smallworld,bogunaNavigability}
and has recently become a useful abstraction in the theory of graph-based
nearest-neighbor search
~\cite{indyk2023worst,diwan2024navigable,khanna2026sparse,conway2026navigable,DiCicco0025}.

We work slightly more generally than Euclidean distance.

\begin{definition}[Progress]
\label{def:progress}
Let $P$ be a set of $n$ points equipped with a symmetric distance function
$d:P\times P\to\R_{\geq0}$ satisfying $d(x,x)=0$, $d(x,y)=d(y,x)$, and $d(x,y)>0$ for $x\neq y$.
Moreover, let $G=(P,E)$ be a
directed graph without self-loops.  For a source $u\in P$ and a target
$t\in P\setminus\{u\}$, an edge $(u,v)\in E$ \emph{makes progress toward
$t$} if $d(v,t)<d(u,t)$.  We write
\[
 C_G(u)=\{t\in P\setminus\{u\}:\exists (u,v)\in E
                 \text{ with }d(v,t)<d(u,t)\}
\]
for the targets toward which $u$ has a progress-making outgoing edge.
\end{definition}

\begin{definition}[Navigability]
A directed graph $G=(P,E)$ is \emph{navigable} if

$$
                         C_G(u)=P\setminus\{u\}
                         \qquad\text{for every }u\in P.
$$

\end{definition}

Navigability guarantees that greedy search cannot get stuck on an in-dataset
target: unless the target has been reached, some outgoing edge makes strict
progress toward it. The guarantee is strong, but it can require many edges.
Conway et al.~\cite{conway2026navigable} show that every $n$-point dataset
admits a navigable graph with $O(n\sqrt{n})$ edges and $O(\sqrt{n})$ average out-degree.
Diwan et al.~\cite{diwan2024navigable} proved lower bounds of
$\Omega(\sqrt{n}/\log n)$ average out-degree, even for Euclidean
datasets in $\Omega(\log^3 n)$ dimensions. Thus their result gets arbitrarily close
to, but does not reach, the $\sqrt n$ scale.

Motivated by the need for substantially sparser constructions in practical
settings, Avi and Musco~\cite{avimusco2026almost} proposed weakening full
navigability by permitting a small fraction of targets to fail.

\begin{definition}[Almost navigability]
\label{def:almost-nav}
For $\eps\in[0,1]$, the graph $G$ is $(1-\eps)$-\emph{almost navigable} if
\[
                       |C_G(u)|\geq(1-\eps)(n-1)
                       \qquad\text{for every }u\in P.
\]
The case $\eps=0$ is \emph{navigability}.
\end{definition}

In the same paper, Avi and Musco
proved that every dataset admits a
$(1-\eps)$-almost navigable graph with average out-degree at most $4/\eps$.
Thus the natural worst-case scales are $1/\eps$ for almost navigability and
$\sqrt n$ for full navigability.

We focus on Euclidean datasets and use genericity only to rule out ties in
distance orderings.

\begin{definition}[Worst-case Euclidean cost]
\label{def:worst-case-cost}
A finite set $P\subset\R^D$ is \emph{generic} if all pairwise distances
between distinct points are different.

For a generic $n$-point Euclidean dataset $P$, consider the sparsest
$(1-\eps)$-almost navigable graph on $P$, measured by its average out-degree.
We define $\AN(n,\eps)$ to be the worst possible value of this quantity over
all generic $n$-point Euclidean datasets, in any dimension.
\end{definition}

Combining the constructions of Avi--Musco and Conway et al. gives
\begin{equation}
 \AN(n,\eps)\leq
 \min\left\{\frac4\eps,2\sqrt n\right\}.
 \label{eq:known-upper}
\end{equation}

Our results show that both scales in~\eqref{eq:known-upper} are unavoidable
and determine their complete transition.

\begin{enumerate}[leftmargin=*,label=\textbf{(\arabic*)},topsep=3pt,itemsep=3pt,parsep=0pt]
\item \textbf{A spherical construction.}
For $C\sqrt{\log n/n}\leq\eps\leq\eps_0$, an i.i.d.
sample from a sphere in
$O(\log n\,\log^2(1/\eps))$ dimensions forces minimum out-degree
$\Omega(1/\eps)$ with high probability.
As a corollary we obtain that
the construction of Avi and Musco is optimal and the
lower bound holds at every source.

\item \textbf{Connection to the Zarankiewicz problem.}
We reveal a direct connection between almost navigability and a classical problem in extremal graph theory, the Zarankiewicz problem.
This problem asks how many edges a graph can have while avoiding a prescribed complete bipartite subgraph, see more in Section~\ref{sec:zarankiewicz}.
We show that dense graphs avoiding $K_{2,t+1}$ give rise to hard instances for navigation, and, conversely, that any universal upper bound for almost navigability implies an upper bound on the maximum density of such graphs.
In particular, applying the known $O(1/\eps)$ construction for almost navigability recovers the classical $O(\sqrt{t}n^{3/2})$ bound for $K_{2,t+1}$-free graphs~\cite{kovari1954zarankiewicz}.

\item \textbf{A F\"uredi-graph construction.}
Using F\"uredi's extremal $K_{2,t+1}$-free graphs, we construct generic
Euclidean datasets for which almost navigability requires large average
out-degree.
Specifically, for the complement regime
$0\leq \eps\leq C\sqrt{\log n/n}$, our construction gives

$$
  \AN(n,\eps)=\Omega\left(\min\left\{\frac1\eps,\sqrt n\right\}\right)
$$

The datasets can be realized in dimension
$O(\sqrt n\log^{3/2}n)$ throughout this regime, and in dimension
$O(\sqrt n\log n)$ when $\eps=0$.  In particular, at full navigability there
are generic Euclidean datasets on which every navigable graph has average
out-degree $\Omega(\sqrt n)$, or equivalently $\Omega(n^{3/2})$ edges.
\end{enumerate}

The rest of the paper follows this dependency order.  Section~\ref{sec:charging}
introduces a combinatorial machinery
known as the 'global charging theorem'
and its proof.
Section~\ref{sec:sphere} gives the
spherical lower bound.  Section~\ref{sec:zarankiewicz} establishes the
Zarankiewicz transference theorem, after which
Section~\ref{sec:furedi} applies F\"uredi's construction
and derives the complete phase.  Technical estimates, realization details,
parameter selection, and secondary consequences are deferred to the
appendices.

The authors used OpenAI's GPT-5.6-Sol substantively in the research underlying this paper. In particular, the tool generated the theoretical results and their proofs presented in the paper.
The human authors independently verified each stated result and proof for correctness and take full responsibility for the mathematical content of the paper.

\section{The global charging theorem}
\label{sec:charging}

All of our lower bounds are based on the same counting principle.  The idea is
to identify many source-target pairs that require a navigation edge, and then
bound how many such pairs a single edge can serve.

For each target $t\in P$, we choose a set $B_t\subseteq P$ of points that are
``close'' to $t$.  The precise meaning of close will depend on the
application.  The only property we require is that $B_t$ contain every point
that is closer to $t$ than one of its members.  Thus, once a source $u$ lies
in $B_t$, every move from $u$ that makes progress toward $t$ must remain
inside $B_t$.

\begin{definition}[Distance down-sets and pair load]
\label{def:downset}
A family $\mathcal B=(B_t)_{t\in P}$ is \emph{downward closed} if $t\in B_t$
for every $t\in P$, and

$$
 u\in B_t,\qquad d(v,t)<d(u,t)
 \quad\Longrightarrow\quad
 v\in B_t.
$$

For such a family, define its \emph{pair load} by

$$
 \Lambda(\mathcal B)
 =
 \max_{u\ne v}
 |\{t\in P:u,v\in B_t\}|.
$$

In other words, $\Lambda(\mathcal B)$ is the largest number of sets $B_t$
that can simultaneously contain the same pair of points.
\end{definition}

The pair load measures how efficiently one directed edge can be reused.
Indeed, suppose that $u\in B_t$ and an edge $u\to v$ makes progress toward
$t$.  Since $v$ is closer to $t$ than $u$, downward closure implies that
$v\in B_t$ as well.  Thus $u\to v$ can help with target $t$ only if both of
its endpoints lie in $B_t$.  Consequently, any fixed edge can help at most
$\Lambda(\mathcal B)$ different targets.

This immediately gives our main counting lemma.

\begin{theorem}[Global charging]
\label{thm:global-charging}
For every downward-closed family $\mathcal B$ and every
$(1-\eps)$-almost navigable graph $G$,
\begin{equation}
|E(G)|
\geq
\frac{
\left(
\sum_{t\in P}(|B_t|-1)-\eps n(n-1)
\right)_+
}
{\max\left({1,\Lambda(\mathcal B)}\right)}.
\label{eq:global-charging-main}
\end{equation}
\end{theorem}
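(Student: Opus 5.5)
We prove the bound by double counting the set of \emph{demanded pairs}
\[
 \mathcal D=\{(u,t): t\in P,\ u\in B_t\setminus\{t\}\},
\]
with $|\mathcal D|=\sum_{t\in P}(|B_t|-1)$, since $t\in B_t$ by downward closure. We compare the demanded pairs that are \emph{served} with those the graph can serve. A pair $(u,t)\in\mathcal D$ is served if $t\in C_G(u)$.

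\textbf{Lower bound on served pairs.} Almost navigability gives, for each source $u$, at most $(n-1)-|C_G(u)|\le \eps(n-1)$ targets $t\ne u$ with $t\notin C_G(u)$. Summing over the $n$ sources, at most $\eps n(n-1)$ pairs $(u,t)$ with $u\neq t$ are unserved. Hence at least $|\mathcal D|-\eps n(n-1)$ demanded pairs are served. The count is also nonnegative, which yields the positive part $(\cdot)_+$.

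\textbf{Charging served pairs to edges.} For each served pair $(u,t)$, we choose one edge $(u,v)\in E$ with $d(v,t)<d(u,t)$ and charge the pair to that edge. Since $u\in B_t$, downward closure forces $v\in B_t$. Also $v\neq u$ because $G$ has no self-loops, so $\{u,v\}$ is a pair of distinct points both lying in $B_t$.

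A fixed edge $(u,v)$ is charged only by pairs $(u,t)$ with the same source $u$. The pair is determined by $t$, and $t$ ranges over $\{t:u,v\in B_t\}$, which has size at most $\Lambda(\mathcal B)$. So each edge absorbs at most $\Lambda(\mathcal B)\le\max(1,\Lambda(\mathcal B))$ charges. The $\max$ only guards against division by zero: if $\Lambda(\mathcal B)=0$, there are no served demanded pairs at all, and the inequality holds trivially.

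\textbf{Conclusion.} Combining the two steps,
\[
 |E(G)|\cdot\max(1,\Lambda(\mathcal B))\ \ge\ \#\text{served pairs}\ \ge\ \Big(\sum_{t\in P}(|B_t|-1)-\eps n(n-1)\Big)_+,
\]
which is exactly \eqref{eq:global-charging-main}. There is no real obstacle here. The only points needing care are that the charging is to directed edges at source $u$, so $t$ alone identifies the pair, and that the deficit bound is taken over all targets rather than only over demanded ones, which is harmless since it only overcounts the loss.
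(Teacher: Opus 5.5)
Your proposal is correct and follows essentially the same route as the paper: count the ordered pairs $(u,t)$ with $u\in B_t\setminus\{t\}$, discard at most $\eps n(n-1)$ failures, and charge each surviving pair to an improving edge $u\to v$. By downward closure both endpoints of that edge lie in $B_t$, so each edge absorbs at most $\Lambda(\mathcal B)$ charges; your explicit per-source deficit count and your treatment of the $\Lambda(\mathcal B)=0$ case are small extra details that the paper leaves implicit.
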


\begin{proof}
Consider all ordered pairs $(u,t)$ with $u\neq t$ and $u\in B_t$.  The number of such pairs is exactly

$$
 \sum_{t\in P}(|B_t|-1)\, .
$$

For each such pair, navigability asks for an outgoing edge $u\to v$ satisfying
$d(v,t)<d(u,t)$.  Since $G$ is only $(1-\eps)$-almost navigable, up to
$\eps n(n-1)$ ordered source--target pairs may fail to have such an edge.
Therefore at least

$$
 \left(
 \sum_{t\in P}(|B_t|-1)-\eps n(n-1)
 \right)_+
$$

of the pairs under consideration admit an improving edge.

Choose one improving edge for each surviving pair $(u,t)$ and charge the pair
to that edge.  If $(u,t)$ is charged to $u\to v$, then $u\in B_t$ and
$d(v,t)<d(u,t)$.  By downward closure, $v\in B_t$.  Hence the edge $u\to v$
can receive charges only from targets $t$ for which both $u$ and $v$ belong
to $B_t$, and there are at most $\Lambda(\mathcal B)$ such targets.

Thus each edge receives at most $\Lambda(\mathcal B)$ charges.  Comparing the
number of surviving source--target pairs with the number of edges proves
\eqref{eq:global-charging-main}.
\end{proof}

To obtain a strong lower bound, we therefore seek families for which
the sets $B_t$ are large but their pairwise overlap, as measured by
$\Lambda(\mathcal B)$, is small.

A particularly useful case occurs when all sets are metric balls of the same
radius.  Then the same charging argument can be applied separately at each
source, giving a pointwise lower bound on its out-degree.

\begin{corollary}[Local common-ball charging]
\label{cor:local-common-ball}
Suppose that for some radius $r$,

$$
 B_t=\{x\in P:d(x,t)\leq r\}
 \qquad\text{for every }t\in P.
$$

Then every $(1-\eps)$-almost navigable graph satisfies
\begin{equation}
\outdeg_G(u)
\geq
\frac{
\bigl(|B_u|-1-\eps(n-1)\bigr)_+
}
{\max\left({1,\max_{v\ne u}|B_u\cap B_v|}\right)}
\qquad\text{for every }u\in P.
\label{eq:local-common-ball-main}
\end{equation}
\end{corollary}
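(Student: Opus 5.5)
The plan is to rerun the charging argument of Theorem~\ref{thm:global-charging} at a single fixed source $u$, using the symmetry of the common-radius balls. Fix $u\in P$. The sets $B_t=\{x:d(x,t)\le r\}$ are downward closed: if $x\in B_t$ and $d(v,t)<d(x,t)$, then $d(v,t)<r$, so $v\in B_t$. Also $t\in B_t$ since $d(t,t)=0$. We only count the targets $t\neq u$ with $u\in B_t$. Because $d$ is symmetric, $u\in B_t$ holds iff $d(u,t)\le r$ iff $t\in B_u$. So there are exactly $|B_u|-1$ such targets.

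Almost navigability gives $|C_G(u)|\ge(1-\eps)(n-1)$, so at most $\eps(n-1)$ targets from $u$ lack a progress-making edge. Hence at least $\bigl(|B_u|-1-\eps(n-1)\bigr)_+$ targets $t\in B_u\setminus\{u\}$ admit an improving edge $u\to v$. For each such $t$, charge $t$ to one improving edge out of $u$.

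Next, bound the load on a single edge $u\to v$, with $v\neq u$ since there are no self-loops. Suppose $t$ is charged to $u\to v$. Then $t\in B_u$, and by downward closure $v\in B_t$, i.e.\ $d(v,t)\le r$. By symmetry this gives $t\in B_v$. So every target charged to $u\to v$ lies in $B_u\cap B_v$, and the edge receives at most $|B_u\cap B_v|\le\max_{v'\ne u}|B_u\cap B_{v'}|$ charges.

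Summing over the $\outdeg_G(u)$ outgoing edges, the number of charged targets is at most $\outdeg_G(u)\cdot\max_{v\ne u}|B_u\cap B_v|$. Dividing gives \eqref{eq:local-common-ball-main}. The $\max(1,\cdot)$ in the denominator only matters when the numerator is zero, or as a safe normalization, and then the bound is trivial. The only real point to check is the symmetry step converting ``$u,v\in B_t$'' into ``$t\in B_u\cap B_v$''. This is where equal radii are essential, and I expect no genuine obstacle.
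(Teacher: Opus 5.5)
Your proposal is correct and follows the paper's own proof. Both fix $u$, use the symmetry of equal-radius balls to get exactly $|B_u|-1$ relevant targets, and discard at most $\eps(n-1)$ failures. Both then use downward closure plus symmetry to show that each edge $u\to v$ can only serve targets in $B_u\cap B_v$. Your remark that the $\max\{1,\cdot\}$ only matters when every intersection is empty, which forces the numerator's positive part to vanish, is a small clarification the paper leaves implicit.
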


The proof is deferred to Appendix~\ref{app:proof-local-common-ball}. 

\section{The spherical construction}
\label{sec:sphere}

We first realize the charging principle directly in Euclidean space using a
simple random construction.  Sample $n$ points independently and uniformly
from the unit sphere $\mathbb S^{d-1}$, and around each sample point $t$ take
a spherical cap containing an $\Theta(\eps)$ fraction of the sphere.

The reason this construction is useful is that spherical caps behave, in high
dimension, almost as though membership in two different caps were independent.
A cap of spherical measure $p$ contains about $pn$ sample points, whereas two
caps centered at distinct sample points typically contain only about $p^2n$
sample points in common.  Comparing this with
Corollary~\ref{cor:local-common-ball}, the first quantity determines how many
nearby targets a source must serve, while the second controls how many of
those targets can reuse the same outgoing edge.  Taking $p=\Theta(\eps)$
therefore suggests an out-degree lower bound of order

$$
 \frac{pn}{p^2n}=\frac1p=\Theta\!\left(\frac1\eps\right).
$$

The main work is to show that these estimates hold simultaneously for all
sample points and all pairs of centers.

\begin{theorem}[Random-sphere lower bound]
\label{thm:sphere-main}
There are universal constants $c,C,\eps_0>0$ such that the following holds.
Suppose

$$
 C\sqrt{\frac{\log n}{n}}\leq\eps\leq\eps_0,
 \qquad
 d\geq C\log n\,\log^2(1/\eps).
$$

Let $P$ consist of $n$ independent points sampled uniformly from
$\mathbb S^{d-1}$.  With probability at least $1-n^{-2}$, every
$(1-\eps)$-almost navigable graph $G$ on $P$ satisfies

$$
 \min_{u\in P}\outdeg_G(u)\geq\frac{c}{\eps},
 \qquad
 |E(G)|\geq\frac{cn}{\eps}.
$$

\end{theorem}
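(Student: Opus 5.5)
We apply Corollary~\ref{cor:local-common-ball} with all balls of a common radius. Fix $p=4\eps$ (any constant multiple above $2\eps$ works) and choose the Euclidean radius $r$ so that the spherical cap $\mathrm{Cap}(x,r)=\{y\in\mathbb S^{d-1}:\|y-x\|_2\le r\}$ has normalized measure exactly $p$. Set $B_t=\{x\in P:\|x-t\|\le r\}$. The corollary then reduces the theorem to two high-probability estimates, valid simultaneously for all sample points:
\[
 |B_u|-1\ \ge\ \tfrac{p}{2}(n-1)=2\eps(n-1)\quad\text{for all }u\in P,
 \qquad
 |B_u\cap B_v|\ \le\ C'p^2 n\quad\text{for all }u\ne v .
\]
Given these, the numerator in~\eqref{eq:local-common-ball-main} is at least $\eps(n-1)$ and the denominator is at most $\max(1,C'p^2n)$. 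The lower bound $\eps\ge C\sqrt{\log n/n}$ forces $p^2n\ge 16C^2\log n\ge 1$, so the ratio is at least $\eps(n-1)/(16C'\eps^2 n)\ge c/\eps$. Summing over $u$ gives the edge bound $|E(G)|\ge cn/\eps$.

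\textbf{Step 1 (ball sizes).} Condition on $u$. The other $n-1$ points are i.i.d., each landing in $\mathrm{Cap}(u,r)$ with probability exactly $p$. Chernoff gives $|B_u|-1\ge p(n-1)/2$ except with probability $\exp(-\Omega(pn))$. Since $pn\ge \sqrt{n\log n}\gg\log n$, a union bound over the $n$ points costs $n^{-3}$.

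\textbf{Step 2 (pairwise overlaps), the main obstacle.} Condition on $u,v$. The remaining $n-2$ points each fall into the lens $L(u,v)=\mathrm{Cap}(u,r)\cap\mathrm{Cap}(v,r)$ with probability $\mu(L(u,v))$, and $|B_u\cap B_v|\le 2+\mathrm{Bin}(n-2,\mu(L))$. The geometric core is a cap-intersection estimate:
\[
 \mu\bigl(\mathrm{Cap}(u,r)\cap\mathrm{Cap}(v,r)\bigr)\le C''p^2
 \quad\text{whenever } |\langle u,v\rangle|\le \delta:=c_0/\log(1/p).
\]
I would prove it by Gaussian approximation. A cap of measure $p$ is $\{y:\langle y,u\rangle\ge \tau/\sqrt d\}$ with $\tau\approx\sqrt{2\log(1/p)}$. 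The pair $(\sqrt d\langle y,u\rangle,\sqrt d\langle y,v\rangle)$ is approximately a bivariate normal with correlation $\rho=\langle u,v\rangle$, and the bivariate tail satisfies $\Pr[X\ge\tau,Y\ge\tau]\le \Pr[X\ge\tau]^2\, e^{O(\rho\tau^2)}$. This is $O(p^2)$ as long as $\rho\tau^2=O(1)$, which is exactly where the condition $\rho\le c_0/\log(1/p)$ comes from. Controlling the error of the spherical-to-Gaussian replacement at the relative scale $p^2$ in the deep tail requires $d\gtrsim \tau^4=\log^2(1/p)$. Alternatively, one can integrate the exact density $(1-s^2)^{(d-3)/2}$ of $\langle y,u\rangle$ and the conditional cap measure directly.

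To use the lens estimate I also need the near-orthogonality condition on the samples. For i.i.d. uniform points, $\Pr[|\langle u,v\rangle|>\delta]\le 2e^{-d\delta^2/2}$. With $d\ge C\log n\log^2(1/\eps)$ and $\delta=c_0/\log(1/p)$, this is at most $n^{-5}$, and a union over pairs gives all inner products below $\delta$. This is precisely where the hypothesis on $d$ enters.

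Given the lens bound, Chernoff yields $\mathrm{Bin}(n-2,C''p^2)\le 2C''p^2n+O(\log n)$ with failure probability $n^{-5}$ per pair. Since $p^2n\gtrsim\log n$, the $O(\log n)$ term is absorbed into $C'p^2n$, and a union bound over the $n^2$ pairs costs $n^{-3}$. Combining this with Step 1 gives total failure probability at most $n^{-2}$.

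The constant $\eps_0$ ensures that $p$ is small enough for the tail asymptotics to apply, e.g.\ $\tau\ge1$.
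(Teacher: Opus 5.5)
Your proposal is correct and follows essentially the same route as the paper. The ingredients match: caps of mass $\Theta(\eps)$ fed into Corollary~\ref{cor:local-common-ball}, a Chernoff bound for ball sizes, and a coherence bound on the sample's inner products (your threshold $\delta=c_0/\log(1/p)$ is equivalent to the paper's $r_n=4\sqrt{\log n/d}$ under the dimension hypothesis). They also include a relative two-cap estimate losing $\exp\bigl(O(|\rho|a^2+a^4/d)\bigr)$, which you only sketch and the paper proves as Lemma~\ref{lem:two-cap} along the lines you describe (global density comparison with a product Gaussian, a Savage-type bivariate tail bound, and transfer back via Lemma~\ref{lem:sphere-marginal}), followed by a binomial tail bound and union bounds.
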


The theorem shows that the known $O(1/\eps)$ upper bound is already optimal,
up to constants, on random Euclidean datasets throughout this range of
$\eps$.  Moreover, the conclusion is stronger than a bound on the total
number of edges: every vertex must itself have out-degree
$\Omega(1/\eps)$.
\paragraph{Proof overview.} Choose a cap threshold
whose mass is $p=8\eps$, and let $B_t$ be the sample points in the cap centered at $t$.
Uniform Chernoff bounds give
$|B_t|-1\geq p(n-1)/2$ for every center.
The main geometric estimate shows that, if two centers have correlation $\rho$ and the normalized cap threshold is $a$,
then \[ \Pr(X\text{ lies in both caps}) \leq Cp^2\exp\!\left(C|\rho|a^2+C\frac{a^4}{d}\right). \] Here $a^2=O(\log(1/p))$, while the maximum correlation among the sample points is $O(\bigl(\log n/d\bigr)^{1/2})$. The stated dimension makes the exponential correction constant. A union bound then yields $|B_u\cap B_v|=O(p^2n)$ for all pairs, and Corollary~\ref{cor:local-common-ball} gives $\outdeg_G(u)=\Omega(1/p)$ at every source. Appendix~\ref{app:sphere-proof} contains the relative two-cap estimate, the full proof, a parameterized version extending below the displayed threshold, and the consequences for a single sample.

The spherical construction is in fact uniform in $\eps$: one random sample in
polylogarithmic dimension simultaneously witnesses the optimal
$\Omega(1/\eps)$ lower bound throughout the moderate-error regime.

\begin{corollary}[One sample certifies the stated moderate-error curve]
\label{cor:simultaneous-sphere-curve}
There are universal constants $c,C,\varepsilon_0>0$ such that, if
$d\geq C\log^3 n$, then with probability at least $1-n^{-1}$ one sample of
$n$ independent uniform points on $\mathbb S^{d-1}$ simultaneously has the
following property for every
\[
 C\sqrt{\frac{\log n}{n}}\leq\varepsilon\leq\varepsilon_0:
\]
every $(1-\varepsilon)$-almost navigable graph on the sample has minimum
out-degree at least $c/\varepsilon$ and at least $cn/\varepsilon$ arcs.
\end{corollary}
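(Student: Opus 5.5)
The plan is to reduce the simultaneous statement to a finite net of values of $\eps$ and a union bound, reusing the single-$\eps$ argument behind Theorem~\ref{thm:sphere-main}. The conclusion is monotone in $\eps$: a $(1-\eps)$-almost navigable graph is also $(1-\eps')$-almost navigable for every $\eps'\geq\eps$. It therefore suffices to certify the bound at a geometric grid $\eps_j=\eps_0 2^{-j}$, $j=0,1,\dots,J$, running down to the lower threshold $C\sqrt{\log n/n}$. Given any $\eps$ in the range, pick the grid point $\eps_j\in[\eps,2\eps]$. The graph is then $(1-\eps_j)$-almost navigable, so it has minimum out-degree at least $c/\eps_j\geq (c/2)/\eps$, and likewise for the edge count. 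The number of grid points is $J=O(\log n)$.

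At each grid point I would apply the argument of Theorem~\ref{thm:sphere-main} with the dimension hypothesis met uniformly. Since $\eps\geq C\sqrt{\log n/n}$, we have $\log(1/\eps)\leq\log n$, and hence
\[
 C\log n\,\log^2(1/\eps)\leq C\log^3 n\leq d.
\]
So every grid point satisfies the hypotheses. The theorem as stated gives failure probability $n^{-2}$ per grid point, and summing over $O(\log n)$ points gives $O(\log n/n^2)\leq n^{-1}$ for large $n$. Small $n$ are absorbed by adjusting constants, since the range of $\eps$ is empty unless $n$ is large relative to $C$.

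The main obstacle, which is more a bookkeeping point, is making the union bound structurally clean. Conceptually it is better to condition once on two sample-level events and then derive every $\eps$ deterministically through Corollary~\ref{cor:local-common-ball}. The first event is that the maximum pairwise correlation satisfies $\max_{u\neq v}|\langle u,v\rangle|=O(\sqrt{\log n/d})$. The second is that, for each grid cap mass $p_j=8\eps_j$, all caps satisfy the Chernoff bound $|B_t|-1\geq p_j(n-1)/2$ and all pairwise intersections satisfy $|B_u\cap B_v|=O(p_j^2 n)$.

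The intersection bound requires care: the exponential correction $\exp(C|\rho|a^2+Ca^4/d)$ must stay bounded with $a^2=O(\log(1/p_j))\leq O(\log n)$. This holds because $d\geq C\log^3 n$ gives $|\rho|a^2=O(\sqrt{\log n/d}\,\log n)=O(1)$ and $a^4/d=O(1)$. The Chernoff concentration for intersections of mean $\Theta(p_j^2 n)\geq\Theta(C^2\log n)$ holds with probability $1-n^{-4}$ per pair once $C$ is large, which survives the union bound over $n^2$ pairs and $O(\log n)$ grid points.

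On the intersection of these events, Corollary~\ref{cor:local-common-ball} gives out-degree at least
\[
 \frac{p_j(n-1)/2-\eps_j(n-1)}{O(p_j^2 n)}=\Omega(1/\eps_j)
\]
at every source. Summing over sources gives the arc bound $cn/\eps$.
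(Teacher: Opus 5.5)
Your proposal is correct and follows the paper's proof essentially verbatim. Both use a dyadic grid of $O(\log n)$ values, meet the dimension hypothesis at each grid point (via $\log(1/\eps)\le\log n$), and union bound the $n^{-2}$ failure probability of Theorem~\ref{thm:sphere-main} over the grid; monotonicity of almost navigability in $\eps$ then transfers the bound from a grid value in $[\eps,2\eps)$ to arbitrary $\eps$ at a factor-two loss, and your optional re-derivation through shared correlation and codegree events just unpacks Lemma~\ref{lem:uniform-incidence-codegree} at each grid point and is not needed.
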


Thus, a single generic dataset captures the optimal behavior throughout the moderate-error regime.
Moreover, the required
dimension is only $O(\log^3 n)$, rather than polynomial in $n$.
The proof is deferred to Appendix~\ref{app:corsimul}.

\section{The Zarankiewicz correspondence}
\label{sec:zarankiewicz}

The spherical construction obtained a navigation lower bound by producing many
large near-target sets with little pairwise overlap.  This is not a phenomenon
specific to random geometry.  The same combinatorial structure appears
naturally in a classical problem from extremal graph theory.

The Zarankiewicz problem asks how many edges a graph can have while avoiding a
fixed complete bipartite graph $K_{s,t}$~\cite{kovari1954zarankiewicz,
furedi1996asymptotics}.  We will use the case $K_{2,t+1}$.  A graph is
$K_{2,t+1}$-free exactly when no two vertices have more than $t$ common
neighbors: indeed, two vertices together with $t+1$ of their common neighbors
form a copy of $K_{2,t+1}$.  Thus, in this case, the Zarankiewicz problem asks
how dense a graph can be while keeping the overlap between every two
neighborhoods small.

This is precisely the tradeoff that appears in the charging theorem.  A dense
graph has many vertex--neighborhood incidences, while bounded common
neighborhoods limit how often a pair of vertices can occur together.  If the
neighborhoods of such a graph can be realized as metric balls in Euclidean
space, they can therefore serve as the sets $B_t$ in
Theorem~\ref{thm:global-charging}.

For a simple graph $H$, write

$$
 \codeg(H)
 =
 \max_{u\neq v}|N_H(u)\cap N_H(v)|
$$

for its maximum codegree.  We also write

$$
 \Zar(n,t)
 =
 \ex(n,K_{2,t+1})
 =
 \max\bigl\{
 |E(H)|:\ |V(H)|=n,\ \codeg(H)\leq t
 \bigr\}.
$$

Thus $\Zar(n,t)$ is the largest number of edges in an $n$-vertex graph in
which every two vertices have at most $t$ common neighbors.

The next theorem gives the bridge from such graphs to Euclidean navigation.
It says that the closed neighborhoods of \emph{every} finite graph can be
realized simultaneously as equal-radius Euclidean balls.
Figure~\ref{fig:codegree-lens} illustrates this correspondence, with common
neighbors represented by points in the intersection of two neighborhood balls.

\begin{figure}[h]
  \centering
  \begin{tikzpicture}[x=1cm,y=1cm]

    \begin{scope}[shift={(-4.55,0)}]
      \coordinate (u)  at (-1.15,1.55);
      \coordinate (v)  at ( 1.15,1.55);
      \coordinate (s1) at (-.34,.30);
      \coordinate (s2) at ( .34,.30);

      \coordinate (a1) at (-2.20,.48);
      \coordinate (a2) at (-2.45,-.62);
      \coordinate (a3) at (-1.50,-1.40);

      \coordinate (b1) at ( 2.20,.48);
      \coordinate (b2) at ( 2.45,-.62);
      \coordinate (b3) at ( 1.50,-1.40);

      \foreach \p in {s1,s2,a1,a2,a3}
        \draw[draw=geBlue!72,line width=.85pt] (u)--(\p);

      \foreach \p in {s1,s2,b1,b2,b3}
        \draw[draw=geTeal!78,line width=.85pt] (v)--(\p);

      \node[ge blue vertex,
            label={[ge tiny label]above:$u$}] at (u) {};
      \node[ge teal vertex,
            label={[ge tiny label]above:$v$}] at (v) {};

      \foreach \p in {a1,a2,a3}
        \node[ge blue vertex] at (\p) {};

      \foreach \p in {b1,b2,b3}
        \node[ge teal vertex] at (\p) {};

      \node[ge gold vertex] at (s1) {};
      \node[ge gold vertex] at (s2) {};
    \end{scope}

    \draw[ge arrow] (-1.15,0) -- (1.15,0);

    \begin{scope}
      \def\ballradius{1.95}

      \coordinate (u) at (3.85,.25);
      \coordinate (v) at (6.35,.25);

      \begin{scope}[on background layer]
        \fill[geBlue!14,opacity=.78]
          (u) circle (\ballradius);
        \fill[geTeal!14,opacity=.78]
          (v) circle (\ballradius);
      \end{scope}

      \draw[draw=geBlue,line width=1pt]
        (u) circle (\ballradius);
      \draw[draw=geTeal,line width=1pt]
        (v) circle (\ballradius);

      \node[ge blue vertex,
            label={[ge tiny label]above:$u$}] at (u) {};
      \node[ge teal vertex,
            label={[ge tiny label]above:$v$}] at (v) {};

      \node[ge blue vertex] at (2.75, 1.22) {};
      \node[ge blue vertex] at (2.45,-.10) {};
      \node[ge blue vertex] at (3.02,-1.20) {};

      \node[ge teal vertex] at (7.45, 1.20) {};
      \node[ge teal vertex] at (7.75,-.10) {};
      \node[ge teal vertex] at (7.18,-1.22) {};

      \node[ge gold vertex] at (5.08, .72) {};
      \node[ge gold vertex] at (5.12,-.42) {};
    \end{scope}

  \end{tikzpicture}

\caption{Geometric realization of small codegree.
Common neighbors lie in the intersection of two neighborhood balls.}

  \label{fig:codegree-lens}
\end{figure}

\begin{theorem}[Euclidean realization and Zarankiewicz transference]
\label{thm:zarankiewicz-main}
Let $H$ be any simple graph on $n$ vertices.  There is a generic set
$P_H\subset\R^{n-1}$ and a radius $r>0$ such that, after identifying
$V(H)$ with $P_H$,

$$
 N_H[t]
 =
 \{x\in P_H:\|x-t\|_2\leq r\}
 \qquad\text{for every }t\in V(H).
$$

Moreover,

$$
 \max_{u\neq v}|N_H[u]\cap N_H[v]|
 \leq
 \codeg(H)+2.
$$

Consequently, for every $n$ and $\eps$,
\begin{equation}
\AN(n,\eps)
\geq
\max_{t\in\mathbb Z_{\geq0}}
\frac{2\Zar(n,t)-\eps n(n-1)}
{n(t+2)}.
\label{eq:zar-transference}
\end{equation}
Equivalently, any universal upper bound
$\AN(n,\eps)\leq A(n,\eps)$ implies

$$
 \Zar(n,t)
 \leq
 \frac{
 \eps n(n-1)+n(t+2)A(n,\eps)
 }{2}.
$$

\end{theorem}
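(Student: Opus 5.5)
We construct $P_H$ by placing each vertex near a vertex of a regular simplex and then perturbing it according to the adjacency matrix of $H$. Let $e_1,\dots,e_n$ be the standard basis of $\R^n$; the affine hyperplane $\sum x_i=1$ is a copy of $\R^{n-1}$. We use a Gram-matrix construction. Let $A$ be the adjacency matrix of $H$, fix a small $\delta>0$, and consider
\[
M=I+\delta A+\eta S ,
\]
where $S$ is a symmetric matrix with zero diagonal and generic entries of size at most $1$, and $0<\eta\ll\delta$. For $\delta<1/(2n)$ the matrix $M$ is positive definite, so it is a Gram matrix of vectors $x_1,\dots,x_n\in\R^n$ with $\|x_i\|^2=1$. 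Their squared distances are
\[
\|x_i-x_j\|^2=2-2\delta A_{ij}-2\eta S_{ij}.
\]
These take values near $2-2\delta$ for edges and near $2$ for non-edges.

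Choosing $r^2=2-\delta$ makes the closed ball of radius $r$ around $x_t$ contain exactly $x_t$ and the $x_j$ with $j\sim t$. This gives $N_H[t]=\{x\in P_H:\|x-t\|\le r\}$.

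Genericity comes from the perturbation. The squared distances $2-2\delta A_{ij}-2\eta S_{ij}$ are pairwise distinct as soon as the off-diagonal entries $S_{ij}$, $i<j$, are pairwise distinct, and a random choice of $S$ achieves this almost surely.

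To reduce the dimension to $n-1$, we project the points onto the hyperplane orthogonal to their centroid direction. More simply, we translate so the centroid is at the origin: the centered points span an affine space of dimension at most $n-1$, and translation preserves all distances. This is the step I expect to need the most care, although affine dimension $n-1$ is automatic for $n$ points, so it is really routine.

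For the codegree bound, $N_H[u]\cap N_H[v]$ consists of the common neighbors of $u$ and $v$, plus possibly $u$ and $v$ themselves when they are adjacent. Hence
\[
|N_H[u]\cap N_H[v]|\le\codeg(H)+2 .
\]

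For the transference inequality, take $H$ with $\Zar(n,t)$ edges and $\codeg(H)\le t$, and set $B_t=N_H[t]$. Because the $B_t$ are equal-radius balls, the family is downward closed: if $d(u,t)\le r$ and $d(v,t)<d(u,t)$, then $d(v,t)<r$. We then apply Theorem~\ref{thm:global-charging}. The left side of the count is
\[
\sum_t(|B_t|-1)=\sum_t\deg_H(t)=2\Zar(n,t),
\]
and the pair load satisfies $\Lambda\le t+2$. Dividing the edge bound by $n$ gives the average out-degree bound \eqref{eq:zar-transference}.

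The equivalent form follows by rearrangement. Given $\AN(n,\eps)\le A(n,\eps)$, the inequality \eqref{eq:zar-transference} yields $A(n,\eps)\ge\frac{2\Zar(n,t)-\eps n(n-1)}{n(t+2)}$, and solving for $\Zar(n,t)$ gives the stated bound.
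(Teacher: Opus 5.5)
Your proposal is correct and follows essentially the paper's proof. Both perturb a regular simplex by a small multiple of the adjacency matrix plus a much smaller distinct-entry perturbation. Your condition $\eta\ll\delta$ needs to be concretely $\eta<\delta/2$: this keeps the edge and nonedge bands separated, and that separation is also what makes distances distinct across the two bands. Both then place the radius between the bands and apply global charging with $B_t=N_H[t]$, using pair load at most $\codeg(H)+2$.

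The one difference is that you perturb the Gram matrix $I$ directly, so positive definiteness is immediate. The paper instead perturbs the squared-distance matrix and invokes Schoenberg's criterion. However, its matrix $\mathbf 1\mathbf 1^T-I-M$ is precisely the squared-distance matrix of the Gram matrix $\tfrac12(I+M)$, so the two constructions coincide up to a factor of two. The projection orthogonal to the centroid direction that you first mention would not be an isometry, but the translation and affine-hull argument you settle on is correct.
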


\paragraph{Proof overview.}
A regular simplex on $n$ vertices has the same distance between every pair of
distinct points.  Starting from this configuration, we perturb the distances
slightly: pairs corresponding to edges of $H$ are made a little shorter,
while nonedges remain a little longer.  We then add much smaller, distinct
perturbations to remove all distance ties.

Because the perturbation is sufficiently small, the resulting squared-distance
matrix is still a Euclidean distance matrix, this follows from Schoenberg's
criterion~\cite{schoenberg1935remarks}.  Hence it is realized by $n$
affinely independent points in $\R^{n-1}$.  Choosing a radius strictly between
the edge-distance band and the nonedge-distance band gives

$$
 x\in N_H[t]
 \quad\Longleftrightarrow\quad
 \|x-t\|_2\leq r.
$$

Thus the combinatorial neighborhoods of $H$ become genuine metric balls in a
generic Euclidean dataset.

Global charging with
$B_t=N_H[t]$ then gives
\[
 |E(G)|\geq\frac{2|E(H)|-\eps n(n-1)}{\codeg(H)+2},
\]
which yields~\eqref{eq:zar-transference} after optimizing over $H$.  The
reverse inequality is a rearrangement.  Full details are in
Appendix~\ref{app:zar-proof}.

The reverse direction explains why the navigation law has the same square-root
shape as the classical extremal bound.

\begin{corollary}[K\H{o}v'ari--S\'os--Tur\'an order]
\label{cor:kst-recovery}
For $t\geq1$,
\[
                         \Zar(n,t)=O(\sqrt t\,n^{3/2}),
\]
and $\Zar(n,0)=\lfloor n/2\rfloor$.
\end{corollary}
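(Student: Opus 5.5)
The plan is to combine the reverse direction of Theorem~\ref{thm:zarankiewicz-main} with the upper bound \eqref{eq:known-upper}, and then choose $\eps$ to balance the two terms. Since $\AN(n,\eps)\leq 4/\eps$ for every $\eps\in(0,1]$, Theorem~\ref{thm:zarankiewicz-main} with $A(n,\eps)=4/\eps$ gives, for every $t\geq0$,
\[
 \Zar(n,t)\leq\frac{\eps n(n-1)+4n(t+2)/\eps}{2}\leq \frac{\eps n^2}{2}+\frac{2n(t+2)}{\eps}.
\]

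For $t\geq1$, we take $\eps=\min\{1,\,2\sqrt{(t+2)/n}\}$, which balances the two terms. If $2\sqrt{(t+2)/n}\leq1$, both terms are $O(\sqrt{t+2}\,n^{3/2})=O(\sqrt t\,n^{3/2})$, using $t+2\leq3t$ for $t\geq1$. Otherwise $t+2>n/4$, so $\sqrt t\,n^{3/2}=\Omega(n^2)$, and the trivial bound $\Zar(n,t)\leq\binom n2$ already suffices.

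The case $t=0$ is handled directly rather than through transference, since transference only gives $O(n^{3/2})$ there. If $\codeg(H)=0$, then no two vertices share a neighbor, so every vertex has degree at most $1$: a vertex $x$ with two neighbors $y,z$ would make $x$ a common neighbor of $y$ and $z$. Hence $H$ is a matching and has at most $\lfloor n/2\rfloor$ edges. A perfect or near-perfect matching attains this bound and has codegree $0$, so $\Zar(n,0)=\lfloor n/2\rfloor$.

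I expect no real obstacle. The only point needing care is the regime where the optimal $\eps$ would exceed $1$, since $\AN$ is only defined for $\eps\in[0,1]$. That regime is covered by the trivial bound, as shown above.
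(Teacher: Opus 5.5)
Your proposal is correct and follows the paper's proof: substitute the $O(1/\eps)$ upper bound into the reverse transference inequality, take $\eps=\min\{1,\Theta(\sqrt{(t+2)/n})\}$, use the trivial $O(n^2)$ bound when this minimum is $1$, and handle $t=0$ directly by observing that codegree $0$ forces a matching. Your version adds explicit constants and notes that a maximum matching attains $\lfloor n/2\rfloor$, but the argument is the same.
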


To see the first bound, substitute the known navigation upper bound
$A(n,\eps)=O(1/\eps)$ into the reverse inequality.  This gives

$$
 \Zar(n,t)
 \leq
 O\!\left(
 \eps n^2+\frac{tn}{\eps}
 \right),
$$

up to lower-order terms.  Balancing the two contributions by taking
$\eps\asymp\sqrt{t/n}$ yields

$$
 \Zar(n,t)=O(\sqrt t n^{3/2}).
$$

This shows that the
$1/\eps$ law for almost navigability and the
$\sqrt tn^{3/2}$ law for bounded-codegree graphs are two manifestations of
the same counting principle.

The correspondence works in both directions.  Navigation upper bounds give
Zarankiewicz upper bounds, while dense bounded-codegree graphs give hard
Euclidean instances for navigation.  

\section{F\"uredi's construction}
\label{sec:furedi}

The spherical construction gives the optimal $\Omega(1/\eps)$ lower bound
when $\eps$ is not too small.  As $\eps$ approaches zero, however, the
relevant scale changes: the general upper bound
$
 \AN(n,\eps)
 \leq
 O\!\left(\min\left\{\frac1\eps,\sqrt n\right\}\right)
$
saturates at $\sqrt n$.  To prove a matching lower bound in this regime, we
need a different source of large, low-overlap neighborhoods.

The Zarankiewicz correspondence from the previous section suggests exactly
where to look.  We would like a graph in which every vertex has many
neighbors, but every two vertices have only a few common neighbors.  Dense
$K_{2,t+1}$-free graphs have precisely these properties.  F\"uredi's
finite-field construction~\cite{furedi1996asymptotics} gives an especially
useful family: its degrees are of order $q$, while its codegrees are at most
$t$.  After realizing its closed neighborhoods as Euclidean balls, the
charging theorem therefore predicts a navigation lower bound of order
$q/t$.

There is one additional issue.  The general realization theorem from
Section~\ref{sec:zarankiewicz} places an arbitrary $N$-vertex graph in
$\R^{N-1}$.  For the present application we want a substantially
lower-dimensional Euclidean witness.  F\"uredi's graphs have enough spectral
structure to compress the realization while preserving the separation
between edges and nonedges.  This gives the following theorem.

\begin{theorem}[F\"uredi-graph lower bound]
\label{thm:furedi-main}
Let $q$ be a prime power, let $t\mid(q-1)$, and put

$$
 N=\frac{q^2-1}{t}.
$$

There is a generic $N$-point Euclidean dataset in dimension
$O(q\log N)$ such that every $(1-\eps)$-almost navigable graph $G$
on the dataset satisfies
\begin{equation}
\min_{u}\outdeg_G(u)
\geq
\frac{\bigl(q-1-\eps(N-1)\bigr)_+}{t+2}.
\label{eq:furedi-sourcewise}
\end{equation}
\end{theorem}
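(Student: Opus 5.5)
We first fix Füredi's graph and read off its parameters.

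\emph{Construction.} Let $\mathbb F_q^*$ act on $(\mathbb F_q^2\setminus\{0\})$ by scaling. Let $T\le\mathbb F_q^*$ be the subgroup of order $t$. The vertices of $H$ are the $N=(q^2-1)/t$ orbits $\langle a,b\rangle$ of $T$. Two classes $\langle a,b\rangle$ and $\langle c,d\rangle$ are adjacent when $ad-bc\in T$; loops are discarded.

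\emph{Degree and codegree facts.} Füredi's analysis gives that every vertex has degree $q$ or $q-1$ (the lower value when a loop was removed). For distinct $u,v$, the number of common neighbours is at most $t$, since two determinant conditions leave only $t$ choices up to $T$-scaling. Hence every closed neighbourhood satisfies $|N_H[u]|-1\geq q-1$, and $\codeg(H)\le t$. I would verify these by direct linear algebra: fixing the two determinant values $ad-bc=\lambda$, $ad'-bc'=\mu$ with $\lambda,\mu\in T$ determines $(a,b)$ uniquely when the two vectors are independent, and the $T$-scaling then divides the count.

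\emph{Reduction to a realization problem.} Given any generic realization with $N_H[t]$ equal to equal-radius balls, Corollary~\ref{cor:local-common-ball} applies. It gives the source-wise bound
\[
\outdeg_G(u)\ge\frac{(q-1-\eps(N-1))_+}{t+2},
\]
because $|N_H[u]\cap N_H[v]|\le \codeg(H)+2$, exactly as in Theorem~\ref{thm:zarankiewicz-main}. Theorem~\ref{thm:zarankiewicz-main} already provides such a realization, but only in dimension $N-1$. So the entire remaining task is to realize the ball structure in dimension $O(q\log N)$.

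\emph{Low-dimensional embedding (main obstacle).} My plan is spectral. Write $A$ for the adjacency matrix (with loops included, so $A$ comes from a Cayley-type sum graph). Füredi's graph is, up to loops, a bounded-degree pseudorandom graph: its top eigenvalue is about $q$, and all other eigenvalues have absolute value $O(\sqrt q)$. This follows from character sums, namely Gauss/Kloosterman-type bounds for the determinant form.

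Set $M=\alpha I+A$ with $\alpha$ somewhat larger than $\max|\lambda_i|$ over the nontrivial eigenvalues, for instance $\alpha=C\sqrt q$, and project off the all-ones direction. This gives a positive semidefinite Gram matrix whose rows have squared norm $\alpha+\deg$. Pairwise inner products are $1$ on edges and $0$ on nonedges, up to a common shift. Points $x_u$ with Gram matrix $M$ then have squared distances $2(\alpha+\deg)-2A_{uv}$, which separates edges from nonedges by a gap of $2$ against a scale of about $q$.

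These vectors live in dimension $N$. Apply a Johnson--Lindenstrauss projection to dimension $k$; distances are preserved up to relative error $O(\sqrt{\log N/k})$ on all pairs. Relative to squared norms $\Theta(q)$, we need the additive error $O(q\sqrt{\log N/k})$ to be $<1$. That would force $k\approx q^2\log N$, which is too large.

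So instead I would use the inner-product form of JL on the centered vectors, whose squared norms are $\alpha+\deg-\deg^2/N$. The key point is that the total energy $\operatorname{tr}M = O(N\sqrt q)+\ldots$ is what matters. A cleaner route is to take a random Gaussian sketch $x_u=Gy_u$ where $y_u$ is the column of $M^{1/2}$. Then the error in $\langle x_u,x_v\rangle$ is of order $\|y_u\|\|y_v\|/\sqrt k=O(q/\sqrt k)$, again $q^2$.

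I therefore expect to need the sharper trick of handling the degree-dependent diagonal separately. Realize instead the vectors $\mathbf 1_{N(u)}$ (the rows of $A$), with norms $\sqrt q$. Edge-ness is encoded by the inner products with the row vectors of a second family. The inner product $\langle\mathbf 1_{N(u)},\mathbf 1_{N(v)}\rangle\le t$, while $\langle e_u,\mathbf 1_{N(v)}\rangle=A_{uv}$. Concatenating scaled copies of the standard basis and the neighbourhood indicators, then sketching in dimension $k=Cq\log N$, gives inner-product errors $O(\sqrt{q\log N/k})$, which is below the constant gap.

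I am not certain which bookkeeping closes the gap. If it fails, I would fall back on the spectral embedding restricted to the top $O(q)$ eigenvectors plus JL. After the embedding, I would add tiny generic perturbations to break ties without crossing the radius threshold, as in the proof of Theorem~\ref{thm:zarankiewicz-main}.
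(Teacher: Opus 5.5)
There is a genuine gap at the step you call the main obstacle. The $O(q\log N)$-dimensional realization is never produced. Your reduction via Corollary~\ref{cor:local-common-ball} with the $\codeg(H)+2$ bound is correct.

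The obstruction you hit is an arithmetic slip, and your first spectral plan already finishes the proof. After discarding loops, $A$ has zero diagonal. So vectors $x_u$ with Gram matrix $M=\alpha I+A$ satisfy $\|x_u\|_2^2=M_{uu}=\alpha$, not $\alpha+\deg u$, and $\|x_u-x_v\|_2^2=2\alpha-2A_{uv}$. The relevant scale is therefore $\alpha=\Theta(\sqrt q)$, not $\Theta(q)$. Positive semidefiniteness needs only $\alpha\geq-\lambda_{\min}(A)$. No projection off $\mathbf 1$ is required, since the Perron eigenvalue is harmless.

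Now take a Johnson--Lindenstrauss map with relative distortion $\eta=1/(8\alpha)$. Edges keep squared length at most $(1+\eta)(2\alpha-2)$, and nonedges at least $(1-\eta)2\alpha$, a positive margin $2-\eta(4\alpha-2)$. The target dimension is $O(\eta^{-2}\log N)=O(q\log N)$ for $\alpha=\sqrt q+1$. This is exactly Lemma~\ref{lem:spectral-compression} applied with $L=\sqrt q+1$, which is how the paper proves the theorem. The same correction fixes your Gaussian-sketch estimate: $\|y_u\|_2\|y_v\|_2=\alpha$, so the inner-product error is $O(\sqrt{q\log N/k})$, which already suffices.

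As written, however, you discard this route, and neither replacement closes the gap. One point per vertex built from $e_u$ and $\gamma\mathbf 1_{N(u)}$ has squared distances containing the term $\gamma^2(\deg u+\deg v-2|N(u)\cap N(v)|)$. You never control this against the edge signal $4\gamma A_{uv}$. The fallback of truncating to the top $O(q)$ eigenvectors is unjustified. Apart from $q+1$ eigenvalues, the looped matrix has all eigenvalues of modulus exactly $\sqrt q$. So unless $t=\Theta(q)$, a rank-$O(q)$ truncation discards most of a flat spectrum, and nothing shows it preserves the edge/nonedge separation.

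There are three smaller issues.

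\emph{Adjacency rule.} Your rule $ad-bc\in T$ is antisymmetric. When $q$ and $t$ are both odd (so $-1\notin T$), it does not define an undirected graph. It also never produces loops, which contradicts your count of degree $q$ or $q-1$. Use F\"uredi's symmetric rule $ac+bd\in T$.

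\emph{Spectral input.} What you need is precisely $\lambda_{\min}(A)\geq-\sqrt q-1$ for the loopless matrix, and this needs no character sums. The two-equation count behind the codegree bound gives the exact identity $\widetilde A^2=qI+t(J-C)$ for the looped matrix. Here $C$ is the block all-ones matrix on the $q+1$ one-dimensional subspaces. Hence every non-Perron eigenvalue of $\widetilde A$ has modulus $1$ or $\sqrt q$, and Weyl's inequality costs at most $1$ when loops are removed.

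\emph{Which matrix to use.} Build the Gram matrix from the loopless $A$, not from $\widetilde A$ as your parenthetical suggests. With $\alpha I+\widetilde A$ the diagonal entries differ by one. Then an edge between two looped vertices and a nonedge between two loopless vertices both have squared length $2\alpha$, so the two bands are no longer separated.
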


\paragraph{Proof overview.}

F\"uredi's graph has minimum degree at least $q-1$ and maximum codegree at
most $t$.  After realization, each closed neighborhood becomes a common-radius
Euclidean ball containing at least $q$ points, while any two such balls
intersect in at most $t+2$ points.  Thus each source has at least $q-1$
nearby targets to serve, of which at most $\eps(N-1)$ may fail, while a
single outgoing edge can be reused for at most $t+2$ targets.  Corollary~
\ref{cor:local-common-ball} therefore gives
for every source $u$

$$
 \outdeg_G(u)
 \geq
 \frac{\bigl(q-1-\eps(N-1)\bigr)_+}{t+2}\, .
$$

To obtain a low-dimensional realization, we use the additional spectral
structure of F\"uredi's graph: its least adjacency eigenvalue is at least
$-\sqrt q-1$.  A suitable shifted adjacency matrix can therefore be factored
to give a Euclidean realization in which edges and nonedges occupy separated
distance ranges.  A Johnson--Lindenstrauss projection reduces the dimension
to $O(q\log N)$ while preserving this separation~\cite{frankl1988johnson},
a final small perturbation removes distance ties without changing the realized
neighborhoods.

The finite-field construction, spectral identity, compressed realization, and
generic perturbation are proved in Appendix~\ref{app:furedi-proof}.

For arbitrary $n$, choose $t\asymp\max\{1,\eps^2n\}$ and a prime
$q\equiv1\pmod t$ with $q=\Theta(\sqrt{nt})$.  A uniform
Siegel--Walfisz estimate supplies this choice for the required range
$t=O(\log n)$~\cite[Corollary~11.21]{montgomeryvaughan2007}.  Padding the
resulting graph and applying global charging gives the small-error part of the
next corollary, Theorem~\ref{thm:sphere-main} supplies the remaining range.

\begin{corollary}[Complete phase and the fully navigable endpoint]
\label{cor:complete-phase}
There are universal constants $C_0,\eps_0>0$ such that, for every
sufficiently large $n$ and every $0\leq\eps\leq\eps_0$,

$$
 \AN(n,\eps)
 = \Theta\left(\min\left\{\frac1\eps,\sqrt n\right\}\right)
\, .
$$

Moreover, the lower bound is witnessed by a generic $n$-point Euclidean
dataset in dimension
\begin{equation}
D=
\begin{cases}
O\left(\log n\,\log^2(1/\eps)\right),
& \eps\geq C_0\sqrt{\log n/n},\\[1.5mm]
O\left( (\sqrt n+\eps n)\log n\right),
& 0\leq\eps<C_0\sqrt{\log n/n}.
\end{cases}
\label{eq:phase-dimension}
\end{equation}
In particular, when $\eps=0$, there are generic
$n$-point datasets in $\R^{O(\sqrt n\log n)}$ for which every fully navigable
graph has average out-degree $\Omega(\sqrt n)$, or equivalently
$\Omega(n^{3/2})$ edges.
\end{corollary}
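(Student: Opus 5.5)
The upper bound is immediate from \eqref{eq:known-upper}: $\AN(n,\eps)\le\min\{4/\eps,2\sqrt n\}=O(\min\{1/\eps,\sqrt n\})$. For the lower bound I split at $\eps_*=C_0\sqrt{\log n/n}$, with $C_0$ the constant $C$ of Theorem~\ref{thm:sphere-main}. When $\eps_*\le\eps\le\eps_0$, Theorem~\ref{thm:sphere-main} with $d=\lceil C\log n\log^2(1/\eps)\rceil$ shows that, with positive probability, the random sample is generic (ties have probability zero) and every $(1-\eps)$-almost navigable graph has average out-degree at least $c/\eps$. Since $\eps\ge\eps_*$ gives $1/\eps\le\sqrt n$, this is $\Omega(\min\{1/\eps,\sqrt n\})$, in the first dimension regime of \eqref{eq:phase-dimension}.

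For $0\le\eps<\eps_*$ I use Theorem~\ref{thm:furedi-main}. Set $t=\max\{1,\lceil K\eps^2 n\rceil\}$ for a large constant $K$; then $t=O(\log n)$. I need a prime $q\equiv1\pmod t$ with $N=(q^2-1)/t\in[n/4,n]$, i.e.\ $q\in[\sqrt{nt}/2,\sqrt{nt}]$. The number of primes $\equiv1\pmod t$ in $[x/2,x]$ with $x=\sqrt{nt}$ is $\sim x/(2\varphi(t)\log x)>0$ uniformly for $t\le(\log x)^A$ by Siegel--Walfisz \cite[Corollary~11.21]{montgomeryvaughan2007}, so $q$ exists for large $n$. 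This is the only number-theoretic input, and I expect it to be the delicate point: uniformity in $t$ is essential, and it is why the regime is capped at $\eps\lesssim\sqrt{\log n/n}$.

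Theorem~\ref{thm:furedi-main} then gives an $N$-point generic set $Q$ in dimension $O(q\log N)=O(\sqrt{nt}\log n)=O((\sqrt n+\eps n)\log n)$. Padding is the main obstacle. The $n-N$ extra points must not create new progress edges inside $Q$ and must not break downward closure. My plan is to place them far away: append a coordinate, keep $Q$ at height $0$, put the padding points at distinct large heights with tiny generic offsets, and check genericity. Then do not use the sourcewise bound directly. Instead apply Theorem~\ref{thm:global-charging} on the padded set with $B_t=N[t]$ for $t\in Q$ and $B_t=\{t\}$ otherwise. Downward closure holds because every padding point is farther from each $t\in Q$ than all of $Q$ (so anything strictly closer than a member of $B_t$ lies in $Q$), and for a padding target $t$ nothing is closer than $t$ itself. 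Also $\Lambda\le t+2$. The bound becomes
\[
|E(G)|\ \ge\ \frac{\bigl(N(q-1)-\eps n(n-1)\bigr)_+}{t+2}.
\]

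It remains to verify the arithmetic. We have $N(q-1)\approx q^3/t\ge (nt)^{3/2}/(8t)=n^{3/2}\sqrt t/8$. In the case $t=1$, meaning $\eps^2n\lesssim1/K$, we get $\eps n^2\le n^{3/2}/\sqrt K$, which is negligible for large $K$. Otherwise $\eps\approx\sqrt{t/(Kn)}$, so $\eps n^2\approx n^{3/2}\sqrt t/\sqrt K$, again dominated. Hence $|E(G)|=\Omega(n^{3/2}/\sqrt t)$, and the average out-degree is $\Omega(\sqrt{n/t})=\Omega(\min\{\sqrt n,1/\eps\})$. Setting $\eps=0$ gives $t=1$ and dimension $O(\sqrt n\log n)$, which yields the $\Omega(n^{3/2})$ navigable endpoint.
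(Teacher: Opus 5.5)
Your proof is correct and follows the paper's argument. It uses the same split at $C_0\sqrt{\log n/n}$ with the sphere theorem above the threshold. Below it, it uses F\"uredi's graph with $t=\max\{1,\lceil K\eps^2 n\rceil\}$, a Siegel--Walfisz prime $q\equiv 1\pmod t$ in a dyadic window at $\sqrt{nt}$, global charging with closed neighbourhoods, and the same arithmetic.

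The only difference is the padding step. The paper adds isolated vertices to $F(q,t)$ before realizing it; this preserves $\lambda_{\min}\geq-\sqrt q-1$, so Lemma~\ref{lem:spectral-compression} realizes the whole padded graph and every $B_t$ is automatically a common-radius ball. You instead pad geometrically with far-away points and check downward closure by hand, which is equally valid. Note that you are really using the common-ball realization supplied by Lemma~\ref{lem:spectral-compression}, not just the displayed statement of Theorem~\ref{thm:furedi-main}.
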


The corollary identifies the complete worst-case tradeoff.  For moderate
error, allowing more failures reduces the required degree proportionally to
$1/\eps$, once $\eps$ falls below the order of $1/\sqrt n$, this improvement
can no longer continue, and the cost saturates at $\Theta(\sqrt n)$.  The
fully navigable case $\eps=0$ is therefore not a separate phenomenon but the
endpoint of the same phase transition.

The proof for arbitrary $n$, together with the stronger sourcewise statement
at the exact algebraic sizes $N=(q^2-1)/t$, is given in
Appendix~\ref{app:furedi-proof}.
This phenomenon is not unique to Füredi’s graphs: projective incidence graphs have the same degree–codegree structure and give analogous sourcewise bounds, see Appendix~\ref{app:projective-family-proof}.

\section{Discussion}
\label{sec:discussion-main}

The main technical theme of the paper is that navigation lower bounds can be
built from low-overlap neighborhood systems.  Random spherical caps provide
such systems directly in Euclidean space, while our Euclidean realization
theorem transfers arbitrary bounded-codegree graphs into the same geometric
framework.  This reveals a direct connection with the Zarankiewicz problem and
allows algebraic extremal constructions, in particular F\"uredi's graphs, to
produce sharp navigation lower bounds.

The main remaining gap is dimensional.  The spherical construction is
polylogarithmic-dimensional, but the algebraic witnesses require substantially
larger dimension in the small-error regime.  It remains open whether the full
$\Theta(\min\left({1/\eps,\sqrt n}\right))$ phase, and in particular the
$\Omega(\sqrt n)$ fully navigable endpoint, can be realized in
polylogarithmic or constant dimension.

\bibliographystyle{plain}
\bibliography{references}

\appendix
\section{Proofs and supplementary results}
\label{app:lower-bounds}

\subsection{Proof of Corollary~\ref{cor:local-common-ball}}
\label{app:proof-local-common-ball}
\begin{proof}
Fix a source \(u\in P\).
By symmetry of the common-radius balls,
\[
    u\in B_t
    \quad\Longleftrightarrow\quad
    t\in B_u.
\]
Hence there are exactly \(|B_u|-1\) targets \(t\neq u\) for which
\(u\in B_t\).

Since \(G\) is \((1-\eps)\)-almost navigable, at most
\(\eps(n-1)\) of these targets can fail to have an improving
outgoing edge from \(u\).  Thus at least
\[
    |B_u|-1-\eps(n-1)
\]
targets must be served by some edge \((u,v)\).

Fix such an edge \((u,v)\).  If it makes progress toward a target
\(t\), then \(u\in B_t\) and
\[
    d(v,t)<d(u,t).
\]
Since \(B_t\) is downward closed, \(v\in B_t\) as well.
By symmetry,
\[
    u,v\in B_t
    \quad\Longleftrightarrow\quad
    t\in B_u\cap B_v.
\]
Therefore the edge \((u,v)\) can serve at most
\(|B_u\cap B_v|\) of the relevant targets.

Since \(u\) has only \(\outdeg_G(u)\) outgoing edges,
\[
    |B_u|-1-\eps(n-1)
    \le
    \outdeg_G(u)\,
    \max_{v\neq u}|B_u\cap B_v|.
\]
Rearranging, and inserting the positive part and the harmless
maximum with \(1\), gives the claim.
\end{proof}

\subsection{Random-sphere estimates and supplementary consequences}
\label{app:sphere-proof}

The following parameterized statement is slightly stronger than
Theorem~\ref{thm:sphere-main} and is convenient for the proof.

\begin{theorem}[Parameterized random-sphere lower bound]
\label{thm:sphere-parameterized}
There are universal constants $c,C,{A>0}$, $n_0\geq1$, and $\eps_0>0$ such
that the following holds.  Let $n\geq n_0$, $0\leq\eps\leq\eps_0$, put
\[
 p=\max\left\{8\eps, A\sqrt{\frac{\log n}{n}}\right\},
 \qquad
 d\geq C\log n\,\log^2(1/p)\, .
\]
With probability at least
$1-n^{-2}$, every $(1-\eps)$-almost navigable graph on $n$ i.i.d. uniform points
of $\mathbb S^{d-1}$ satisfies
\[
 \min_u\outdeg_G(u)\geq\frac{c}{p}
 =\Omega\!\left(\min\left\{\frac1\eps,
                     \sqrt{\frac n{\log n}}\right\}\right),
 \qquad |E(G)|=\Omega(n/p).
\]
\end{theorem}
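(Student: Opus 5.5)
We will apply Corollary~\ref{cor:local-common-ball} to spherical caps. On $\mathbb S^{d-1}$ the Euclidean distance is a monotone function of the inner product. So for a threshold $\tau$ we may write
\[
B_t=\{x\in P:\langle x,t\rangle\geq\tau\},
\]
and these are common-radius Euclidean balls intersected with $P$. Choose $\tau=\tau(p)$ so that the normalized cap measure $\sigma(\{x:\langle x,e\rangle\ge\tau\})$ equals $p$. Standard Gaussian-tail comparisons give $\tau=a/\sqrt d$ with $a^2=\Theta(\log(1/p))$, valid because $d\gg\log(1/p)$. It then suffices to show that, with probability at least $1-n^{-2}$, the following two estimates hold:
\[
|B_u|-1\ge\tfrac12 p(n-1)\quad\text{for all }u,
\qquad
|B_u\cap B_v|\le K p^2 n\quad\text{for all }u\neq v.
\]
Given these, Corollary~\ref{cor:local-common-ball} yields
\[
\outdeg_G(u)\ge \frac{p(n-1)/2-\eps(n-1)}{Kp^2n}\ge \frac{c}{p},
\]
since $\eps\le p/8$. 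Summing over $u$ gives $|E(G)|\ge cn/p$. The identification $1/p=\Omega(\min\{1/\eps,\sqrt{n/\log n}\})$ is immediate from the definition of $p$.

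\textbf{Degree step.} Condition on the center $u$. The other $n-1$ points are i.i.d., and each falls in the cap with probability exactly $p$. Chernoff gives failure probability $\exp(-p(n-1)/8)$. Since $p\ge A\sqrt{\log n/n}$, we have $pn\ge A\sqrt{n\log n}\gg\log n$, so a union bound over the $n$ centers costs $o(n^{-3})$.

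\textbf{Correlation step.} Condition on the pair $u,v$ and write $\rho=\langle u,v\rangle$. A union bound with standard concentration on the sphere gives $|\rho|\le C_1\sqrt{\log n/d}$ for all pairs simultaneously, with probability $1-n^{-3}$.

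\textbf{Two-cap estimate.} This is the main obstacle. For a uniform $X$ we must show
\[
\Pr(\langle X,u\rangle\ge\tau,\ \langle X,v\rangle\ge\tau)\le C p^2\exp\!\left(C|\rho|a^2+C a^4/d\right).
\]
My plan is as follows.
\begin{enumerate}
\item Write the joint density of $(\langle X,u\rangle,\langle X,v\rangle)$. It is proportional to $(1-Q)^{(d-4)/2}$, where $Q$ is the quadratic form with Gram-inverse matrix $\begin{pmatrix}1&\rho\\ \rho&1\end{pmatrix}^{-1}$.
\item Compare this density with the product of the two one-dimensional marginal densities, which are proportional to $(1-s^2)^{(d-3)/2}$.
\item In the region $s_1,s_2\le\tau'$ with $\tau'=O(a/\sqrt d)$, the log-ratio differs from zero by $O(d|\rho|s_1s_2)+O(d s^4)$. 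This is $O(|\rho|a^2+a^4/d)$.
\item The tail region beyond $\tau'$ is handled separately using the fact that one-dimensional tails decay geometrically past the threshold, at rate $e^{-\Theta(a\sqrt d\,\delta)}$. This contributes at most a constant times $p^2$.
\end{enumerate}
The care needed is in controlling the normalizing constants, which are ratios of Gamma functions, and in uniformity over small $p$. If the direct density comparison gets messy, I would instead condition on $\langle X,u\rangle=s$. Then $\langle X,v\rangle$ equals $\rho s$ plus $\sqrt{1-\rho^2}\sqrt{1-s^2}$ times an independent one-dimensional spherical coordinate in dimension $d-1$. The estimate then reduces to a shifted one-dimensional tail bound, integrated against the cap density of $s$.

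With $a^2=O(\log(1/p))$, $|\rho|=O(\sqrt{\log n/d})$ and $d\ge C\log n\log^2(1/p)$, both correction terms in the exponent are $O(1)$. So the per-point probability is $O(p^2)$.

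\textbf{Intersection step.} Given $u,v$, the count $|B_u\cap B_v|$ is at most $2$ plus a Binomial$(n-2,O(p^2))$ variable. Here $p^2n\ge A^2\log n$, so Chernoff gives $|B_u\cap B_v|\le Kp^2n$ except with probability $n^{-5}$ once $A$ and $K$ are large. A union bound over the $n^2$ pairs completes the proof.
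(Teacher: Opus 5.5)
Your proposal is correct. Outside the two-cap estimate it coincides with the paper's proof: the same cap neighborhoods of mass $p$, the same Chernoff bound on $|B_u|-1$, the same coherence bound $|\rho|=O(\sqrt{\log n/d})$, the same per-pair binomial tail for $|B_u\cap B_v|$, and the same appeal to Corollary~\ref{cor:local-common-ball} using $\eps\le p/8$.

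The genuine difference is how you prove the relative two-cap bound.

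\textbf{The paper's route.} It compares the two-dimensional spherical density globally with a product of Gaussians, $f_{d,2}\le e^2\phi\phi$ everywhere. It then bounds the correlated Gaussian upper tail by a Savage-type supporting-plane estimate that keeps the $1/a^2$ Mills prefactor. Finally it converts $\overline\Phi(a)^2$ back to $p^2$ via the marginal comparison $\overline\Phi(a)\le Cp\,e^{Ca^4/d}$. Every step must be Gaussian-sharp, since losing even a factor $a^2\asymp\log(1/p)$ would degrade the degree bound to $c/(p\log(1/p))$.

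\textbf{Your route.} You compare against the product of the spherical marginals themselves on a window $[\tau,K\tau]^2$. Integrating then returns $p^2$ automatically, with no Mills-ratio matching; the price is a truncation. Both halves go through.
\begin{itemize}
\item On the window you only need an upper bound on the log-ratio. Using $\log(1-Q)\le -Q$, $-\log(1-s^2)\le s^2+s^4$ for $s^2\le 1/2$, and $Q-s_1^2-s_2^2\ge -2\rho s_1s_2/(1-\rho^2)$ (drop the nonnegative $\rho^2$ term) gives $O(1)+O(d|\rho|s_1s_2)+O(s^2+ds^4)$. Your two-sided phrasing would also produce an $O(d\rho^2 s^2)$ term, which is harmless.
\item For the tail, the ratio bound $g(s+\delta)\le e^{-(d-3)s\delta}g(s)$ for the marginal density $g\propto(1-s^2)^{(d-3)/2}$ gives $p_d(Ka)\le p\,e^{-(K-1)a^2(d-3)/d}$. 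This is at most $p^2$ once $K$ is a large enough absolute constant. That step is exactly where you need the crude lower bound $\log(1/p)=O(a^2)$, i.e.\ $p\ge e^{-Ca^2}$. State it explicitly, since it is the only marginal information your route uses beyond $a^2=O(\log(1/p))$.
\end{itemize}

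\textbf{What each buys.} Your argument is more elementary: crude two-sided bounds $a^2=\Theta(\log(1/p))$ replace the sharp marginal lemma and the Savage bound. The paper's argument avoids splitting the domain at the cost of that sharper Gaussian bookkeeping.
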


The proof has two ingredients: a uniform relative estimate for the
intersection of two spherical caps and the local charging inequality from
Corollary~\ref{cor:local-common-ball}.

\subsubsection{Relative spherical-cap estimates}

Dimension-dependent cap intersections have been analyzed in idealized
graph-search models~\cite{prokhorenkova2020graph}.  We need a uniform
\emph{relative} estimate at a cap mass that may vanish with $n$, the Gaussian
step below is a two-dimensional specialization of the multivariate Mills-ratio
method of Savage~\cite{savage1962mills}, included in full to track constants.

Let $X$ be uniform on the unit sphere $\mathbb S^{d-1}$.  For a unit vector
$u$ and $a\geq0$, put
\begin{equation}
       \operatorname{Cap}(u,a)
       =\left\{x\in\mathbb S^{d-1}:
                    \sqrt d\,\langle x,u\rangle\geq a\right\},
 \qquad
       p_d(a)=\Pr(X\in\operatorname{Cap}(u,a)).
\end{equation}
By rotation invariance, $p_d(a)$ does not depend on $u$.

\begin{lemma}[Marginal comparison]
\label{lem:sphere-marginal}
There are absolute constants $c,C,c_0>0$ such that, for $d\geq8$ and
$1\leq a\leq\sqrt d/4$,
\begin{equation}
       p_d(a)\geq
       c\exp\!\left(-C\frac{a^4}{d}\right)\overline\Phi(a),
\end{equation}
where $\overline\Phi(a)=\Pr(Z\geq a)$ for $Z\sim N(0,1)$.
Moreover, for every $d\geq8$ and every $1\leq a<\sqrt d$,
\begin{equation}
                        p_d(a)\leq e^{-c_0a^2}.
\end{equation}
\end{lemma}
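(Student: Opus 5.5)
We work directly with the density of a single coordinate. If $X$ is uniform on $\mathbb S^{d-1}$, then $T=\langle X,u\rangle$ has density on $[-1,1]$ equal to
\[
 f_d(s)=c_d(1-s^2)^{(d-3)/2},
 \qquad
 c_d=\frac{\Gamma(d/2)}{\sqrt\pi\,\Gamma((d-1)/2)}.
\]
Substituting $s=x/\sqrt d$, the scaled variable $\sqrt d\,T$ has density
\[
 g_d(x)=\frac{c_d}{\sqrt d}\,(1-x^2/d)^{(d-3)/2}
\]
on $|x|\le\sqrt d$, so that
\[
 p_d(a)=\int_a^{\sqrt d}g_d(x)\,dx.
\]
By Gautschi-type bounds on the gamma ratio, $c_d/\sqrt d$ lies between two absolute constants times $1/\sqrt{2\pi}$ for $d\ge 8$. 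Both inequalities then reduce to comparing $(1-x^2/d)^{(d-3)/2}$ with $e^{-x^2/2}$.

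\textbf{Lower bound.} For $0\le y\le 1/2$ we have $\log(1-y)\ge -y-y^2$. Apply this with $y=x^2/d$, which is valid for $x\le \sqrt d/\sqrt2$. Since $(d-3)/2\le d/2$ and the logarithm is negative,
\[
 \frac{d-3}{2}\log\!\left(1-\frac{x^2}{d}\right)\ \ge\ -\frac{x^2}{2}-\frac{x^4}{2d}.
\]
We do not integrate over the whole tail, only over the window $[a,a+1/a]$. This window stays inside $x\le \sqrt d/2$ because $a\le\sqrt d/4$ and $1/a\le 1$.

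On the window, $x^4\le (a+1/a)^4\le 16a^4$ since $a\ge1$. Hence
\[
 p_d(a)\ \ge\ c\,e^{-8a^4/d}\int_a^{a+1/a}e^{-x^2/2}\,dx.
\]
The truncated Gaussian integral is at least $e^{-1}\,e^{-a^2/2}/a$ times a constant. Indeed, on the window $x^2\le a^2+2+1/a^2$, so $e^{-x^2/2}\ge e^{-3/2}e^{-a^2/2}$, and the window has length $1/a$. Finally, the Mills ratio gives $\overline\Phi(a)\le e^{-a^2/2}/(a\sqrt{2\pi})$, which yields the claim with $C=8$.

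\textbf{Upper bound.} Since $\log(1-y)\le -y$, we get
\[
 (1-x^2/d)^{(d-3)/2}\le e^{-x^2(d-3)/(2d)}\le e^{-5x^2/16}
\]
for $d\ge 8$. Therefore
\[
 p_d(a)\le C'\int_a^\infty e^{-5x^2/16}\,dx\le C''e^{-5a^2/16}/a.
\]
To absorb $C''$ into the exponent for all $a\ge1$, we use two regimes. For $a$ above an absolute threshold $a_*$, we have $C''e^{-5a^2/16}\le e^{-a^2/8}$. For $1\le a\le a_*$, we need $p_d(a)\le 1-\delta$ uniformly, and then set $c_0$ small enough that $e^{-c_0a_*^2}\ge 1-\delta$.

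The uniform bound is immediate: $p_d(a)\le p_d(1)$, and by symmetry $p_d(1)\le \Pr(T\ge0)=1/2$. So $\delta=1/2$ works, and we take $c_0=\min\{1/8,\log 2/a_*^2\}$.

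\textbf{Main obstacle.} The only delicate point is the uniform two-sided control of the normalizing constant $c_d/\sqrt d$ for all $d\ge 8$, not just asymptotically. We handle it with Wendel's inequality,
\[
 \Gamma(z+\tfrac12)/\Gamma(z)\in\left[\sqrt{z-\tfrac12}\,,\ \sqrt{z}\right]
\]
for $z=(d-1)/2$.
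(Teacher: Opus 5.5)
Your proposal is correct and follows essentially the same route as the paper. Both use the one-coordinate density with the gamma-ratio prefactor pinned between constant multiples of $(2\pi)^{-1/2}$ (you via Wendel/Gautschi, the paper via Stirling), a second-order lower bound on $\log(1-s)$ on the window $[a,a+1/a]$ plus the Mills upper bound, and, for the upper bound, $\log(1-s)\le -s$ with the constant absorbed for large $a$ and $p_d(a)\le 1/2$ for small $a$. One cosmetic point: window containment is cleanest as $a+1/a\le 2a\le\sqrt d/2$, since your stated reason only gives $\sqrt d/4+1$, which is at most $\sqrt d/2$ only because the range $1\le a\le\sqrt d/4$ forces $d\ge 16$.
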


\begin{proof}
The density of $Y=\sqrt d\,\langle X,u\rangle$ is
\begin{equation}
 f_{d,1}(y)=
 \frac{\Gamma(d/2)}{\sqrt{\pi d}\,\Gamma((d-1)/2)}
 \left(1-\frac{y^2}{d}\right)^{(d-3)/2}
 \mathbf 1\{|y|<\sqrt d\}.
\end{equation}
The gamma-ratio prefactor in the one-coordinate density formula is between two
absolute constant multiples of $(2\pi)^{-1/2}$.  This follows directly from
Stirling's inequalities
\[
 \sqrt{2\pi}\,z^{z+1/2}e^{-z}
 \leq\Gamma(z+1)
 \leq\sqrt{2\pi}\,z^{z+1/2}e^{-z+1/(12z)},
 \qquad z\geq1.
\]

For $0\leq s\leq1/2$,
\begin{equation}
             -s-\frac{s^2}{2(1-s)}
             \leq\log(1-s)\leq-s.
\end{equation}
Apply the lower inequality with $s=y^2/d$.  For
$y\in[a,a+1/a]$, we have $y\leq2a$ and hence $y^2/d\leq1/4$.  Comparing
the one-coordinate density formula with the standard Gaussian density
$\phi(y)$ gives
\begin{equation}
             f_{d,1}(y)
             \geq c\exp\!\left(-C\frac{a^4}{d}\right)\phi(y).
\end{equation}
Also, for $a\geq1$,
\begin{equation}
 \int_a^{a+1/a}\phi(y)\,dy
 \geq \frac{1}{a}\phi(a+1/a)
 \geq e^{-3/2}\frac{\phi(a)}{a}
 \geq e^{-3/2}\overline\Phi(a),
\end{equation}
where the last inequality is the elementary Mills upper bound
$\overline\Phi(a)\leq\phi(a)/a$.  Integrating the local density lower bound on
this interval proves the marginal comparison.

For the upper bound, we use the spherical-cap Chernoff argument.
If $G_1,\ldots,G_d$ are independent standard Gaussians,
then $G/\|G\|_2$ is uniform on the sphere.  Equivalently, direct integration
of the one-coordinate density formula and
$\log(1-s)\leq-s$ gives, for every $1\leq a<\sqrt d$,
\[
 p_d(a)
 \leq C\sqrt d\int_{a/\sqrt d}^1
            e^{-(d-3)s^2/2}\,ds
 \leq C e^{-(d-3)a^2/(2d)}.
\]
For $a$ above a sufficiently large absolute constant, the prefactor $C$ is
absorbed by decreasing the coefficient of $a^2$ in the exponent.  Below that
constant, $p_d(a)\leq1/2$ supplies the same conclusion after decreasing that
coefficient once more.  This proves the spherical subgaussian estimate for a
universal $c_0>0$.
\end{proof}

\begin{lemma}[Relative two-cap intersection]
\label{lem:two-cap}
There are absolute constants $C_0,C_1>0$ such that the following holds.
Let $d\geq C_0$, $1\leq a\leq\sqrt d/4$, and let $u,v$ be unit vectors with
$|\langle u,v\rangle|\leq1/2$.  Writing
$\rho=\langle u,v\rangle$ and $p=p_d(a)$,
\begin{equation}
 \Pr\bigl(X\in\operatorname{Cap}(u,a)
                 \cap\operatorname{Cap}(v,a)\bigr)
 \leq C_1 p^2
       \exp\!\left(C_1|\rho|a^2+C_1\frac{a^4}{d}\right).
\end{equation}
\end{lemma}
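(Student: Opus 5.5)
The plan is to reduce to a two-dimensional computation. The pair $(Y_1,Y_2)=(\sqrt d\langle X,u\rangle,\sqrt d\langle X,v'\rangle)$ is taken in an orthonormal frame: $e_1=u$, and $e_2$ is the unit vector in $\mathrm{span}(u,v)$ orthogonal to $u$. Then $v=\rho e_1+\sqrt{1-\rho^2}\,e_2$. The joint density of the first two coordinates $(W_1,W_2)=(\sqrt d X_1,\sqrt d X_2)$ is explicit:
\[
 f_{d,2}(w)=c_d\Bigl(1-\tfrac{|w|^2}{d}\Bigr)^{(d-4)/2}\mathbf 1\{|w|^2<d\}.
\]
Stirling gives $c_d\le C/(2\pi)$.

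The event is $\{W_1\ge a,\ \rho W_1+\sqrt{1-\rho^2}W_2\ge a\}$. On this event $|w|\ge a$. Using $\log(1-s)\le -s$, I bound $f_{d,2}(w)\le C\,\phi_2(w)\,e^{2|w|^2/d}$, where $\phi_2$ is the standard bivariate Gaussian density. The factor $e^{2|w|^2/d}$ is harmless only where $|w|$ is not much larger than $a$. So I split the region at $|w|^2\le 16a^2$ versus $|w|^2>16a^2$.

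\textbf{Far region.} Here I bound the mass crudely by $\Pr(|W|^2>16a^2)$, using the same spherical Chernoff argument as in Lemma~\ref{lem:sphere-marginal} applied to two coordinates. This gives at most $Ce^{-4c'a^2}$ with $c'$ close to $1/2$ (since $a\le\sqrt d/4$, the exponent loss is controlled). I then need this to be at most $Cp^2$. By Lemma~\ref{lem:sphere-marginal}, $p\ge c e^{-Ca^4/d}\overline\Phi(a)\ge c e^{-Ca^4/d}e^{-a^2/2}/(2a)$, so $p^2\gtrsim e^{-2Ca^4/d}e^{-a^2}/a^2$. Since $e^{-2a^2}\ll e^{-a^2}/a^2$, the far region is covered, up to absorbing constants into the $e^{C_1a^4/d}$ factor. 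I expect the exponent bookkeeping here to need care, possibly shrinking $a\le\sqrt d/4$ usage, but only constants are at stake.

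\textbf{Near region.} Here the density is at most $C e^{32a^2/d}\phi_2(w)$. Since $a^2\le d/16$, the factor $e^{32a^2/d}$ is at most a constant. The problem is therefore the Gaussian bivariate orthant probability $\Pr(Z_1\ge a,\ \rho Z_1+\sqrt{1-\rho^2}Z_2\ge a)$ for a correlated Gaussian pair with correlation $\rho$ and $|\rho|\le1/2$. This is the Savage/Mills-ratio step, and I expect it to be the main obstacle. The goal is to show it is at most $C\overline\Phi(a)^2e^{C|\rho|a^2}$.

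I would handle it directly. The joint density of the correlated pair is
\[
 \phi_\rho(x,y)=\frac{1}{2\pi\sqrt{1-\rho^2}}\exp\Bigl(-\frac{x^2-2\rho xy+y^2}{2(1-\rho^2)}\Bigr).
\]
Substitute $x=a+s$, $y=a+r$ with $s,r\ge0$. The exponent becomes
\[
 -\frac{2(1-\rho)a^2+2(1-\rho)a(s+r)+s^2-2\rho sr+r^2}{2(1-\rho^2)}.
\]
Because $|\rho|\le1/2$, the quadratic part is positive definite with uniform constants. Dropping it yields at most $C\exp(-a^2/(1+\rho))\,\bigl(\int_0^\infty e^{-as/(1+\rho)}ds\bigr)^2\le C e^{-a^2/(1+\rho)}/a^2$.

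I then compare with $\overline\Phi(a)^2\ge c\,e^{-a^2}/a^2$ for $a\ge1$. The ratio is $e^{a^2(1-1/(1+\rho))}=e^{a^2\rho/(1+\rho)}\le e^{2|\rho|a^2}$.

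Finally, I convert $\overline\Phi(a)^2$ back to $p^2$ via the marginal comparison of Lemma~\ref{lem:sphere-marginal}, which costs a factor $e^{2Ca^4/d}$. Combining the two regions gives the claimed bound $C_1p^2\exp(C_1|\rho|a^2+C_1a^4/d)$.
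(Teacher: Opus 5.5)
Your proof is correct and follows the paper's route. You reduce to the two-coordinate density, compare it with a product Gaussian, and bound the correlated Gaussian orthant by dropping the nonnegative quadratic term (the Savage supporting-plane step). The Mills lower bound then gives $C\overline\Phi(a)^2e^{C|\rho|a^2}$, and Lemma~\ref{lem:sphere-marginal} converts this back to $p^2$.

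The only difference is your near/far split, which is unnecessary. The density vanishes unless $|w|^2<d$, so your own factor $e^{2|w|^2/d}$ is already at most $e^2$ everywhere. The paper obtains the same global comparison $f_{d,2}\le e^2\phi(y)\phi(w)$ by maximizing the log-ratio, so your far-region tail estimate can simply be dropped.
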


\begin{proof}
Rotate coordinates so that
$u=e_1$ and $v=\rho e_1+\sqrt{1-\rho^2}\,e_2$.  The joint density of
$(Y,W)=\sqrt d(X_1,X_2)$ is
\begin{equation}
 f_{d,2}(y,w)=
 \frac{d-2}{2\pi d}
 \left(1-\frac{y^2+w^2}{d}\right)^{(d-4)/2}
 \mathbf 1\{y^2+w^2<d\}.
\end{equation}
We first compare this density globally to two independent standard Gaussians.
For $s=y^2+w^2<d$, the logarithm of the density ratio, apart from the
nonpositive term $\log(1-2/d)$, is
\begin{equation}
                    g_d(s)=\frac{s}{2}
                    +\frac{d-4}{2}\log(1-s/d).
\end{equation}
Its derivative is
\[
                    g_d'(s)=\frac{4-s}{2(d-s)}.
\]
Thus $g_d$ is maximized at $s=4$ (or at an endpoint when $d\leq4$), and
$\sup_s g_d(s)\leq2$.  Therefore
\begin{equation}
                         f_{d,2}(y,w)\leq e^2\phi(y)\phi(w)
                         \qquad(d>4).
\end{equation}

Under the Gaussian density in the global comparison above, the variables
\[
                   Y,\qquad Z=\rho Y+\sqrt{1-\rho^2}\,W
\]
are standard Gaussians of correlation $\rho$.  We next bound their joint
upper tail without losing the Gaussian Mills prefactor.  Their density at
$(y,z)$ is
\[
 \frac{1}{2\pi\sqrt{1-\rho^2}}
 \exp\!\left[-Q_\rho(y,z)\right],
 \qquad
 Q_\rho(y,z)=\frac{y^2-2\rho yz+z^2}{2(1-\rho^2)}.
\]
For $y=a+s,z=a+t$ with $s,t\geq0$, convexity gives the supporting-plane
bound
\begin{equation}
 Q_\rho(a+s,a+t)
 \geq Q_\rho(a,a)+\frac{a}{1+\rho}(s+t)
 =\frac{a^2}{1+\rho}+\frac{a}{1+\rho}(s+t).
\end{equation}
Integration of the supporting-plane bound yields the explicit
Savage-type bound~\cite{savage1962mills}
\begin{equation}
 \Pr(Y\geq a,Z\geq a)
 \leq
 \frac{(1+\rho)^2}{2\pi a^2\sqrt{1-\rho^2}}
 \exp\!\left(-\frac{a^2}{1+\rho}\right).
\end{equation}
For $a\geq1$, the elementary Mills lower bound
\begin{equation}
                 \overline\Phi(a)\geq
                 \frac{a}{1+a^2}\phi(a)
                 \geq\frac{1}{2a}\phi(a)
\end{equation}
and $|\rho|\leq1/2$ imply
\begin{equation}
 \Pr(Y\geq a,Z\geq a)
 \leq C\,\overline\Phi(a)^2
       \exp(C|\rho|a^2).
\end{equation}
Indeed, the ratio of the exponential terms in the Gaussian joint-tail bound
and $\overline\Phi(a)^2$ is at most
$\exp(a^2\rho/(1+\rho))$ when $\rho\geq0$, and is at most one when
$\rho<0$.

Combining the global density comparison and Gaussian relative-tail estimate,
then using
Lemma~\ref{lem:sphere-marginal} in the form
\[
             \overline\Phi(a)
             \leq C p\exp(Ca^4/d),
\]
proves the two-cap bound.
\end{proof}

The two error terms in the two-cap bound reflect the two losses used in the
proof.  Positive center correlation changes the Gaussian large-deviation
exponent by $\Theta(\rho a^2)$, while our comparison between spherical and
Gaussian marginal tails incurs an $\exp(O(a^4/d))$ correction in the
moderate-deviation range.

\subsubsection{Uniform incidence and the stronger parameterized theorem}

Fix universal constants $p_0\in(0,1/20]$, $A\geq1$, and $C_2\geq1$ as in
the proof below.  For all sufficiently large $n$, assume
\begin{equation}
 A\sqrt{\frac{\log n}{n}}\leq p\leq p_0,
 \qquad
 d\geq C_2\log n\,\log^2(1/p).
\end{equation}
Choose the unique $a>0$ for which $p_d(a)=p$.  Taking $p_0$ sufficiently
small ensures $a\geq1$: for $n_0$ large we have $d\geq16$, and
Lemma~\ref{lem:sphere-marginal} at $a=1$ gives a universal positive lower
bound on $p_d(1)$.  By the global spherical subgaussian estimate in the same
lemma, with $C_5=c_0^{-1}$,
\begin{equation}
                       a^2\leq C_5\log(1/p).
\end{equation}
After increasing $C_2$, this upper bound on $a$ also ensures
$a\leq\sqrt d/4$.

Draw $x_1,\ldots,x_n$ independently and uniformly from
$\mathbb S^{d-1}$.  For every center $x_t$, define the common-radius cap
neighborhood
\begin{equation}
 B_t=\{x_u:\sqrt d\,\langle x_u,x_t\rangle\geq a\}.
\end{equation}
Since $\|x_u-x_t\|_2^2=2-2\langle x_u,x_t\rangle$, the sets $B_t$ are
downward-closed metric balls.  They contain their centers.
By symmetry, the number of targets $t\ne u$ whose ball contains $u$ is
$D_u:=|B_u|-1$, for $u\ne v$, put $C_{uv}:=|B_u\cap B_v|$.  Thus
Corollary~\ref{cor:local-common-ball} applies with these parameters.

\begin{lemma}[Uniform incidence and codegree]
\label{lem:uniform-incidence-codegree}
Under the parameter range above, the constants $A,C_2$ can be chosen
universally so that, with probability at least $1-n^{-2}$, simultaneously for
all distinct $u,v$,
\begin{equation}
             D_u\geq\frac{p(n-1)}{2},
             \qquad
             C_{uv}\leq C_3p^2n,
\end{equation}
for a universal $C_3$.
\end{lemma}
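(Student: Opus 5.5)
We prove the two estimates separately and finish with a union bound. Each estimate is a Chernoff bound conditional on one or two centers.

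\textbf{Incidence bound.} Fix $u$ and condition on $x_u$. The other $n-1$ points are independent and uniform, so $D_u\sim\mathrm{Bin}(n-1,p)$ exactly. Here we use that $p_d(a)=p$ by the choice of $a$, together with rotation invariance. The multiplicative Chernoff lower tail gives
\[
\Pr\bigl(D_u<p(n-1)/2\bigr)\leq\exp(-p(n-1)/8).
\]
Since $p\geq A\sqrt{\log n/n}$, we have $pn\geq A\sqrt{n\log n}\geq A\log n$. Choosing $A$ large makes this probability at most $n^{-4}$, and a union bound over the $n$ sources costs $n^{-3}$.

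\textbf{Codegree bound: correlations.} Fix $u\neq v$. First control the correlation $\rho=\langle x_u,x_v\rangle$. By the subgaussian estimate of Lemma~\ref{lem:sphere-marginal}, applied to both tails,
\[
\Pr\bigl(|\rho|\geq\lambda\bigr)\leq2e^{-c_0\lambda^2 d}.
\]
With $\lambda=\sqrt{K\log n/d}$ and $K$ large, this is at most $n^{-5}$. Moreover $\lambda\leq1/2$, since $d\geq C_2\log n$.

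\textbf{Codegree bound: the bad event.} Now condition on $x_u,x_v$ with $|\rho|\leq\lambda$. Each of the other $n-2$ points lies in both caps independently with probability $q$. By Lemma~\ref{lem:two-cap}, using $a^2\leq C_5\log(1/p)$ and $a\leq\sqrt d/4$,
\[
q\leq C_1p^2\exp\bigl(C_1\lambda a^2+C_1a^4/d\bigr).
\]
Both exponent terms are bounded by absolute constants:
\begin{itemize}
\item $\lambda a^2\leq C_5\sqrt{K\log n/d}\,\log(1/p)\leq C_5\sqrt{K/C_2}$, because $d\geq C_2\log n\log^2(1/p)$;
\item $a^4/d\leq C_5^2\log^2(1/p)/d\leq C_5^2/C_2$.
\end{itemize}
Hence $q\leq C'p^2$ for a universal $C'$.

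\textbf{Codegree bound: concentration.} The count $C_{uv}$ equals $\mathrm{Bin}(n-2,q)$ plus the indicators that $x_u\in B_v$ and $x_v\in B_u$, which add at most $2$. Set $C_3=2eC'+2$, say. Recall that $p^2n\geq A^2\log n$. The Chernoff bound in the form $\Pr(X\geq R)\leq2^{-R}$ for $R\geq2e\,\mathbb E X$ then gives
\[
\Pr\bigl(C_{uv}>C_3p^2n\bigr)\leq2^{-2eC'A^2\log n}\leq n^{-5}
\]
for $A$ large. A union bound over the fewer than $n^2$ pairs, for both the correlation event and this event, costs $O(n^{-3})$.

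\textbf{Conclusion.} All failure probabilities together total at most $n^{-2}$ for large $n$.

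\textbf{Main obstacle.} The delicate point is the interplay of constants: $A$, $K$ and $C_2$ must be chosen universally and in the right order. First fix $K$ from the correlation tail. Then choose $C_2$ large relative to $K$, which makes the exponential correction in Lemma~\ref{lem:two-cap} constant; this also ensures $a\leq\sqrt d/4$ and $\lambda\leq1/2$. Finally choose $A$ large so that both Chernoff tails beat $n^{-4}$. The last step relies crucially on $p^2n\gtrsim\log n$, which is exactly the lower threshold $p\geq A\sqrt{\log n/n}$.
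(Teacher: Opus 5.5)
Your proposal is correct and follows essentially the same route as the paper. The paper's argument has the same four steps: a multiplicative Chernoff lower tail for $D_u\sim\mathrm{Bin}(n-1,p)$; a coherence bound $|\rho|=O(\sqrt{\log n/d})$; the two-cap lemma giving success probability $O(p^2)$ once $d\gtrsim\log n\log^2(1/p)$; and an upper-tail binomial bound driven by $p^2n\ge A^2\log n$, all closed by union bounds. The differences are cosmetic: you use a per-pair correlation event instead of a global one, the $2^{-R}$ Chernoff form instead of Bernstein, and the subgaussian cap estimate cited directly. Your ordering of the constants $K\to C_2\to A$ is sound.
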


\begin{proof}
We first control all center correlations.  The one-coordinate density formula,
together with
$\log(1-s)\leq-s$, implies that for a fixed unit $u$ and
$0<r\leq1/2$,
\begin{equation}
 \Pr\bigl(|\langle X,u\rangle|>r\bigr)
 \leq C\exp\!\left(-\frac{(d-3)r^2}{2}\right)
\end{equation}
whenever $r\sqrt d\geq1$, the factor from integrating the Gaussian tail is
absorbed into $C$.  Set
\begin{equation}
                         r_n=4\sqrt{\frac{\log n}{d}}.
\end{equation}
Increasing $C_2$ makes $r_n\leq1/2$, and a union bound using the coherence
tail estimate gives
\begin{equation}
 \Pr\left(\max_{u\ne v}|\langle x_u,x_v\rangle|>r_n\right)
 \leq n^{-4}.
\end{equation}

For a fixed $u$, conditional on $x_u$, the indicators
$\mathbf1\{x_u\in B_t\}$, $t\ne u$, are independent Bernoulli variables of
mean $p$.  Hence $D_u\sim\operatorname{Bin}(n-1,p)$.  The multiplicative
Chernoff inequality
\begin{equation}
          \Pr(Z\leq\mu/2)\leq e^{-\mu/8}
          \qquad(Z\sim\operatorname{Bin}(N,q),\ \mu=Nq)
\end{equation}
and a union bound show that every $D_u\geq p(n-1)/2$ except with probability
at most $ne^{-p(n-1)/8}=o(n^{-3})$.

Fix now $u\ne v$ and condition on $x_u,x_v$.  Apart from the at most two
centers $t\in\{u,v\}$, the codegree $C_{uv}$ is binomial with $n-2$ trials
and success probability
\[
 q_{uv}=\Pr\bigl(X\in\operatorname{Cap}(x_u,a)
                        \cap\operatorname{Cap}(x_v,a)\bigr).
\]
On the event $|\langle x_u,x_v\rangle|\leq r_n$,
Lemma~\ref{lem:two-cap}, the upper bound on $a$, and the parameter range give
\begin{align}
 r_na^2
 &\leq4C_5\sqrt{\frac{\log n}{d}}\log(1/p)
 \leq \frac{4C_5}{\sqrt{C_2}},
 \\
 \frac{a^4}{d}
 &\leq\frac{C_5^2\log^2(1/p)}{d}
 \leq\frac{C_5^2}{C_2\log n}.
\end{align}
Thus, after fixing $C_2$, $q_{uv}\leq C_4p^2$ for a universal $C_4$.

We use the following additive binomial Chernoff bound, obtained by optimizing
the exponential moment:
\begin{equation}
 \Pr(Z\geq\mu+s)
 \leq\exp\!\left(-\frac{s^2}{2(\mu+s/3)}\right),
 \qquad Z\sim\operatorname{Bin}(N,q),\quad\mu=Nq.
\end{equation}
Since $p^2n\geq A^2\log n$, choose $A$ large enough that the binomial
Bernstein bound, with threshold $4C_4p^2n$, is at most
$n^{-6}$.  More explicitly, for each pair,
\[
\Pr\!\left(C_{uv}>4C_4p^2n+2,\
           |\langle x_u,x_v\rangle|\leq r_n\right)
\leq n^{-6},
\]
this conditions only on the fixed pair $(x_u,x_v)$ and does not condition
the other centers on the global coherence event.  A union bound over fewer
than $n^2/2$ pairs, followed by the uniform coherence event, proves that
$C_{uv}\leq4C_4p^2n+2\leq C_3p^2n$ for all pairs, except with probability
at most $2n^{-4}$.  After increasing $n_0$, the incidence failure probability
is at most $n^{-3}$, so the coherence, codegree, and incidence failures sum to
less than $n^{-2}$.  This proves the stated probability.
\end{proof}

\subsubsection{Proof of Theorem~\ref{thm:sphere-parameterized}}
\begin{proof}
Work on the event from Lemma~\ref{lem:uniform-incidence-codegree}.  For
$n\geq n_0$,
\[
 D_u-\varepsilon(n-1)
 \geq\left(\frac p2-\varepsilon\right)(n-1)
 \geq\frac{3p}{8}(n-1),
\]
where the last step uses $p\geq8\varepsilon$.
Corollary~\ref{cor:local-common-ball}
and $\max_{v\ne u}C_{uv}\leq C_3p^2n$ therefore give
\[
 \deg_G^+(u)
 \geq\frac{3p(n-1)}{8C_3p^2n}
 \geq\frac{1}{4C_3p}
\]
for every $u$.  This proves the minimum-outdegree bound, summing it over $u$
proves the edge bound.
\end{proof}

\begin{proof}[Proof of Theorem~\ref{thm:sphere-main}]
Choose the constants so that the lower bound on $\varepsilon$ implies
$8\varepsilon\geq A\sqrt{\log n/n}$ and the upper bound implies
$8\varepsilon\leq p_0$.  Set $p=8\varepsilon$ in
Theorem~\ref{thm:sphere-parameterized}.  Its dimension condition becomes
$d\geq C\log n\,\log^2(1/\varepsilon)$ after changing a universal constant,
and its conclusion gives minimum out-degree $\Omega(1/\varepsilon)$ and
$\Omega(n/\varepsilon)$ arcs.
\end{proof}

\subsubsection{Proof of Corollary~\ref{cor:simultaneous-sphere-curve}}
\label{app:corsimul}

\begin{proof}[Proof of Corollary~\ref{cor:simultaneous-sphere-curve}]
Take a dyadic grid $\varepsilon_0,\varepsilon_0/2,\ldots$ down to the last
value at least $C\sqrt{\log n/n}$.  It has $O(\log n)$ values.  Increase the
dimension constant so that Theorem~\ref{thm:sphere-main} applies at every
grid value, and union bound its $n^{-2}$ failure probability over the grid.
For all sufficiently large $n$ the resulting failure probability is at most
$n^{-1}$.  Given an intermediate $\varepsilon$, choose a grid value
$\varepsilon'$ with
$\varepsilon\leq\varepsilon'<2\varepsilon$.  Every
$(1-\varepsilon)$-almost navigable graph is also
$(1-\varepsilon')$-almost navigable, so the grid guarantee gives minimum
out-degree at least $c/\varepsilon'\geq c/(2\varepsilon)$, summing over
sources gives the arc bound.
\end{proof}

\subsection{Euclidean realization and Zarankiewicz transference}
\label{app:zar-proof}

This subsection proves Theorem~\ref{thm:zarankiewicz-main} and
Corollary~\ref{cor:kst-recovery}.

\subsubsection{The common-ball realization}
The construction is a small perturbation of the squared-distance matrix of a
regular simplex: edge pairs are made slightly shorter than nonedge pairs,
while a still smaller perturbation makes all pairwise distances distinct.
\begin{proof}[Proof of the realization statement in
Theorem~\ref{thm:zarankiewicz-main}]
Let $A$ be the adjacency matrix of $H$. Choose a symmetric zero-diagonal
matrix $W$ whose entries $W_{ij}$, indexed by unordered pairs $\{i,j\}$,
are mutually distinct and lie in $(0,1)$. For $\alpha>0$, put
\[
M=\alpha(A+W/2),
\qquad
S_{ii}=0,
\qquad
S_{ij}=1-M_{ij}\quad(i\neq j).
\]
Thus $S$ is our candidate squared-distance matrix.

Choose $\alpha>0$ sufficiently small that
\[
\|M\|_{\mathrm{op}}<\frac12
\]
and every $S_{ij}$ is positive. Since
\[
S=\mathbf 1\mathbf 1^T-I-M,
\]
for every $z\perp\mathbf 1$ we have
\[
z^TSz
=
-\|z\|^2-z^TMz
\le
-\|z\|^2+\|M\|_{\mathrm{op}}\|z\|^2
<
-\frac12\|z\|^2.
\]
Hence $S$ is strictly conditionally negative definite.

Let
\[
\Pi=I-\frac1n\mathbf 1\mathbf 1^T,
\]
the orthogonal projection onto $\mathbf 1^\perp$, and define
\[
B=-\frac12\Pi S\Pi.
\]
By Schoenberg's Euclidean-distance-matrix criterion~\cite{schoenberg1935remarks}, $S$ is a squared
Euclidean distance matrix whenever $B\succeq0$, in that case $B$ is the
Gram matrix of the centered realization. For every $x$,
\[
x^TBx
=
-\frac12(\Pi x)^TS(\Pi x)
\ge
\frac14\|\Pi x\|^2.
\]
Therefore $B\succeq0$, with equality only when $\Pi x=0$. Thus
\[
\ker B=\operatorname{span}\{\mathbf 1\},
\qquad
\operatorname{rank}(B)=n-1.
\]
Consequently $S$ is the squared-distance matrix of $n$ affinely
independent points in $\mathbb R^{n-1}$.

It remains to verify the desired distance ordering. If $ij\in E(H)$, then
\[
M_{ij}\in(\alpha,3\alpha/2),
\]
whereas if $ij\notin E(H)$, then
\[
M_{ij}\in(0,\alpha/2).
\]
Hence
\[
S_{ij}<1-\alpha
\quad\text{for edges},
\qquad
S_{ij}>1-\alpha/2
\quad\text{for nonedges}.
\]
Choose
\[
r^2\in(1-\alpha,1-\alpha/2).
\]
Then
\[
ij\in E(H)
\quad\Longleftrightarrow\quad
d(i,j)<r,
\]
so for every vertex $t$,
\[
N_H[t]=\{x:d(x,t)\le r\}.
\]
Since the entries $W_{ij}$ are mutually distinct, all unordered pairwise
distances are distinct as well.

Finally, for distinct $u,v$,
\[
|N_H[u]\cap N_H[v]|
\le
|N_H(u)\cap N_H(v)|+2
\le
\operatorname{codeg}(H)+2,
\]
which proves the stated pair-intersection bound.
\end{proof}

\begin{corollary}[Low-codegree obstruction]
\label{cor:low-codegree-obstruction}
For every simple $n$-vertex graph $H$, there is a generic Euclidean
$n$-point set such that every $(1-\eps)$-almost navigable graph $G$ on it
satisfies
\[
 |E(G)|\geq
 \frac{2|E(H)|-\eps n(n-1)}{\codeg(H)+2}.
\]
\end{corollary}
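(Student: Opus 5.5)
The plan is to combine the realization just proved with Theorem~\ref{thm:global-charging}. Given $H$, the realization statement of Theorem~\ref{thm:zarankiewicz-main} supplies a generic set $P_H\subset\R^{n-1}$ and a radius $r$ with $N_H[t]=\{x\in P_H:\|x-t\|_2\le r\}$ for every vertex $t$. This is the dataset we will use. On it we take the family $\mathcal B=(B_t)_{t\in P_H}$ with $B_t=N_H[t]$.

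The first step is to check that $\mathcal B$ is downward closed in the sense of Definition~\ref{def:downset}. Clearly $t\in B_t$. Now suppose $u\in B_t$ and $d(v,t)<d(u,t)$. Then $d(v,t)<d(u,t)\le r$, so $v$ lies in the closed ball, i.e.\ $v\in B_t$.

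The second step bounds the pair load. For $u\ne v$, the targets $t$ with $u,v\in B_t$ are exactly those with $t\in B_u\cap B_v$. This uses the symmetry $u\in N_H[t]\iff t\in N_H[u]$. Hence
\[
\Lambda(\mathcal B)=\max_{u\ne v}|N_H[u]\cap N_H[v]|\le\codeg(H)+2,
\]
by the intersection bound already established. In particular $\max\{1,\Lambda(\mathcal B)\}\le\codeg(H)+2$.

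The third step computes the numerator. Since $|B_t|-1=\deg_H(t)$, we get $\sum_t(|B_t|-1)=2|E(H)|$ by the handshake identity. Theorem~\ref{thm:global-charging} then gives
\[
|E(G)|\ge\frac{\bigl(2|E(H)|-\eps n(n-1)\bigr)_+}{\max\{1,\Lambda(\mathcal B)\}}\ge\frac{2|E(H)|-\eps n(n-1)}{\codeg(H)+2}.
\]
The last inequality is fine in both cases. If the numerator is negative, the right-hand side is negative and the bound is trivial. Otherwise, enlarging the denominator only decreases the quotient.

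There is no real obstacle here. The only points needing care are the downward closure of closed balls (which the strict inequality makes immediate) and the reduction of the pair load to closed-neighbourhood intersections via symmetry.
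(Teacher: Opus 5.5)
Your proposal is correct and follows the paper's own proof. You apply Theorem~\ref{thm:global-charging} to the realized common-radius balls $B_t=N_H[t]$ from Theorem~\ref{thm:zarankiewicz-main}, using the handshake identity for the numerator and the closed-neighbourhood intersection bound $\codeg(H)+2$ for the pair load; you also make explicit the downward closure of closed balls and the symmetry step $\{t:u,v\in B_t\}=B_u\cap B_v$, which the paper's one-line proof leaves implicit.
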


\begin{proof}
Apply the global charging theorem with $B_t=N_H[t]$.  The sum of
$|B_t|-1$ is $2|E(H)|$, and Theorem~\ref{thm:zarankiewicz-main} bounds the pair load.
\end{proof}

\subsubsection{The transference inequality}

\begin{proof}[Proof of the transference statements in
Theorem~\ref{thm:zarankiewicz-main}]
Fix $t\geq0$ and let $H$ be extremal for $\Zar(n,t)$.  Then
$
|E(H)|=\Zar(n,t)
$
and
$
\codeg(H)\leq t.
$
The low-codegree obstruction gives

$$
|E(G)|
\geq
\frac{2\Zar(n,t)-\eps n(n-1)}{t+2},
$$

and therefore

$$
\AN(n,\eps)
\geq
\frac{2\Zar(n,t)-\eps n(n-1)}{n(t+2)}.
$$

Maximizing over $t$ proves the forward inequality.  Conversely, if
$\AN(n,\eps)\leq A(n,\eps)$, then

$$
2\Zar(n,t)-\eps n(n-1)
\leq n(t+2)A(n,\eps),
$$

hence

$$
\Zar(n,t)
\leq
\frac{\eps n(n-1)+n(t+2)A(n,\eps)}{2}.
$$

\end{proof}

\subsubsection{Proof of Corollary~\ref{cor:kst-recovery}}
\begin{proof}
When $t=0$, no vertex can have two distinct neighbors, so an extremal graph
is a matching.  For $t\geq1$, substitute $A(n,\eps)\leq C/\eps$ into
the rearranged inequality and take
\[
                  \eps=\min\{1,\sqrt{C(t+2)/n}\}.
\]
When this value is below one, the two
terms balance at $O(\sqrt{t+2}\,n^{3/2})=O(\sqrt t\,n^{3/2})$, otherwise the trivial
$O(n^2)$ bound is of the displayed order.
\end{proof}
We therefore recover the classical K\H{o}v'ari--S'os--Tur'an theorem by a
different route, using almost navigability as the intermediate object.

\subsection{Spectral compression of common-ball realizations}
\label{app:spectral-compression-proof}

\begin{lemma}[Spectral common-ball compression]
\label{lem:spectral-compression}
Let $H$ be a simple graph on $m\geq2$ vertices with adjacency matrix $A_H$.
If $\lambda_{\min}(A_H)\geq-L$ for some $L\geq2$, then there is a generic
point set $P_H\subset\R^D$ and a common radius $r>0$ such that
\[
 N_H[t]=\{x\in P_H:d(x,t)\leq r\}
 \qquad\text{for every }t,
\]
where $D=O(L^2\log m)$.
\end{lemma}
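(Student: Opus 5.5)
The construction is a Gram-matrix factorization of a shifted adjacency matrix, followed by a Johnson--Lindenstrauss projection and a small generic perturbation. Set
\[
 K=A_H+L\,I .
\]
Since $\lambda_{\min}(A_H)\ge -L$, $K\succeq 0$, so $K=Y^TY$ for some vectors $y_1,\dots,y_m\in\R^m$. Each has $\|y_i\|^2=K_{ii}=L$, and $\langle y_i,y_j\rangle=(A_H)_{ij}$ for $i\ne j$. Hence
\[
 \|y_i-y_j\|^2=2L-2(A_H)_{ij},
\]
which equals $2L-2$ on edges and $2L$ on nonedges. Any radius with $r^2\in(2L-2,2L)$ therefore realizes every closed neighborhood $N_H[t]$ as a common-radius ball. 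The vertex itself is included since its distance is $0$. The relative gap between the two bands is $1/L$, and this is what forces dimension $L^2\log m$.

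Next apply a Johnson--Lindenstrauss map $\Psi:\R^m\to\R^D$ with distortion parameter $\delta=c/L$ for a small absolute $c$. Then $D=O(\delta^{-2}\log m)=O(L^2\log m)$, and all pairwise squared distances are preserved within a factor $1\pm\delta$. Edge squared distances become at most $(1+\delta)(2L-2)$ and nonedge ones at least $(1-\delta)2L$. The edge band stays strictly below the nonedge band when
\[
 (1+\delta)(2L-2)<(1-\delta)2L ,
\]
which reduces to $\delta(4L-2)<2$, i.e.\ $\delta<1/(2L-1)$. For $L\ge2$ the choice $\delta=1/(4L)$ works, leaving an absolute gap of order $1$ between the bands. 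Pick $r^2$ strictly inside this gap.

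Finally, perturb each projected point by an independent tiny Gaussian (or generic) vector in $\R^D$ whose norm is much smaller than the gap. Almost surely all pairwise distances become distinct. The perturbation moves each distance by less than half the gap, so edges stay strictly inside radius $r$ and nonedges strictly outside, and the identity $N_H[t]=\{x:d(x,t)\le r\}$ persists. Genericity holds because, for fixed pairs $\{i,j\}\neq\{k,l\}$, the event $\|x_i-x_j\|=\|x_k-x_l\|$ is a proper algebraic subset of the perturbation space, hence has measure zero.

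The main obstacle is the bookkeeping in the JL step. One must check that a relative error of order $1/L$ suffices, since the bands are separated by an additive $2$ on squared distances of scale $2L$. This is exactly what makes the dimension $L^2\log m$ rather than $\log m$, and it requires $L\ge2$ (or at least $L$ bounded below) to keep the constants uniform. One also has to ensure that the perturbation scale is chosen after the projection, so the gap being protected is the one that survives the projection.
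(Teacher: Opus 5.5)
Your proposal is correct and follows the paper's proof essentially step for step. Both arguments factor $A_H+LI$ as a Gram matrix, which gives squared distances $2L-2$ on edges versus $2L$ on nonedges. Both then apply a Johnson--Lindenstrauss projection with distortion $\Theta(1/L)$: the paper takes $\eta=1/(8L)$ and you take $1/(4L)$, and either satisfies $\eta(4L-2)<2$. Both finish with a small perturbation, chosen after the projection, that avoids the proper algebraic tie sets and so makes all distances distinct.
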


The general realization uses dimension $m-1$, but a spectral bound on $A_H$ allows a much lower-dimensional realization via Johnson-Lindenstrauss~\cite{frankl1988johnson}.

\begin{proof}
Let $A=A_H$.  Since $\lambda_{\min}(A)\geq-L$, the matrix $A+LI$ is positive
semidefinite.  Choose vectors $x_1,\ldots,x_m$ having this matrix as their
Gram matrix.  Thus $\|x_i\|_2^2=L$, and for distinct $i,j$,
\[
 \|x_i-x_j\|_2^2
 =2L-2A_{ij}
 =
 \begin{cases}
  2L-2,&ij\in E(H),\\
  2L,&ij\notin E(H).
 \end{cases}
\]

Set $\eta=1/(8L)$.  The Johnson--Lindenstrauss lemma gives a linear map into
$\R^D$, with
\[
                              D=O(\eta^{-2}\log m)
                               =O(L^2\log m),
\]
such that every squared pairwise distance is multiplied by a factor in
$[1-\eta,1+\eta]$.  Write $y_i$ for the projected points.  Every edge then
has squared length at most $(1+\eta)(2L-2)$, whereas every nonedge has
squared length at least $(1-\eta)2L$.  The latter quantity exceeds the former,
since
\[
 (1-\eta)2L-(1+\eta)(2L-2)
 =2-\eta(4L-2)>0.
\]
Choose $r^2$ strictly between these two bands.  Then
\[
                         ij\in E(H)
                         \quad\Longleftrightarrow\quad
                         \|y_i-y_j\|_2<r.
\]

The two distance bands are separated by a positive margin, so this graph is
unchanged under sufficiently small perturbations of the points.  Inside any
such open neighborhood, configurations with a repeated unordered pairwise
distance lie in a finite union of proper algebraic hypersurfaces.  Choose a
configuration outside this union.  It is generic, still realizes $H$ at the
common radius $r$, and therefore satisfies
\[
                         N_H[t]=\{x:d(x,t)\leq r\}
\]
for every center $t$.
\end{proof}

\subsection{F\"uredi's construction and the complete phase}
\label{app:furedi-proof}

This subsection proves Theorem~\ref{thm:furedi-main} and
Corollary~\ref{cor:complete-phase}, then records an exact-size sourcewise
endpoint.

We apply the preceding machinery to F\"uredi's finite-field graphs, using their degree, codegree, and spectral bounds together with Siegel-Walfisz formula to obtain the complete phase uniformly in the required parameters.

\begin{lemma}[F\"uredi's low-codegree graph]
\label{lem:furedi-graph}
Let $q$ be a prime power and let $t\mid(q-1)$.  There is a simple graph
$F(q,t)$ with
\[
       |V(F(q,t))|=\frac{q^2-1}{t},\qquad
       |E(F(q,t))|\geq\frac{(q^2-1)(q-1)}{2t},
\]
and
\[
                         \codeg(F(q,t))\leq t.
\]
If $A_{q,t}$ is its adjacency matrix, then also
\[
                         \lambda_{\min}(A_{q,t})
                         \geq-\sqrt q-1.
\]
In particular, $F(q,t)$ is $K_{2,t+1}$-free.
\end{lemma}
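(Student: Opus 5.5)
We will use Füredi's construction. Let $\mathbb F=\mathbb F_q$, and let $T\le\mathbb F^*$ be the unique multiplicative subgroup of order $t$; it exists because $t\mid(q-1)$.

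\emph{Construction.} The vertices are the orbits $[x]=xT$ of nonzero vectors $x\in\mathbb F^2\setminus\{0\}$. There are $(q^2-1)/t$ of them, since $T$ acts freely. Declare $[x]\sim[y]$ iff $\det(x,y)=x_1y_2-x_2y_1\in T$. This is well defined: for $a,b\in T$ we have $\det(ax,by)=ab\det(x,y)$, and $T$ is a group. It is symmetric because $\det(y,x)=-\det(x,y)$ and $-1\in T$. Note $-1\in T$ needs $t$ even or $q$ even; when $t$ is odd and $q$ is odd we instead use the set $T\cup(-T)$, or follow Füredi's variant with the symmetric bilinear form $x_1y_2+x_2y_1$. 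I would fix whichever convention makes the relation symmetric and carry it through. There are no loops, since $\det(x,x)=0$.

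\emph{Degree.} For fixed $x$, each $c\in T$ gives a line $\{y:\det(x,y)=c\}$ with $q$ solutions. That is $tq$ vectors $y$, hence $q$ orbits. One orbit may be $[x]$ itself only in the symmetric-form variant with a self-loop, which we discard. So the degree is at least $q-1$, giving $|E|\ge N(q-1)/2$.

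\emph{Codegree.} Take $[x]\ne[y]$. If $x,y$ are linearly independent, then the common neighbours $z$ satisfy $\det(x,z)=a$ and $\det(y,z)=b$ with $a,b\in T$. This linear system has a unique solution for each pair $(a,b)$, giving $t^2$ vectors and hence at most $t$ orbits. If $y=\lambda x$ with $\lambda\notin T$, then we need both $a$ and $\lambda a$ in $T$, which is impossible. So the codegree is at most $t$, and the graph is $K_{2,t+1}$-free.

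\emph{Spectrum (main obstacle).} I would first lift to the Cayley-like graph $\tilde A$ on $\mathbb F^2\setminus\{0\}$, with $x\sim y$ iff $\det(x,y)\in T$. Then $A_{q,t}$ is the quotient: $\tilde A$ restricted to $T$-invariant functions. Every eigenvalue of $A_{q,t}$ is therefore an eigenvalue of $\tilde A$, so it suffices to bound $\lambda_{\min}(\tilde A)$.

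Next extend $\tilde A$ to all of $\mathbb F^2$ with $\tilde A=\sum_{c\in T}M_c$, where $(M_c)_{x,y}=\mathbf 1[\det(x,y)=c]$. Expand each $M_c$ through additive characters: $\mathbf 1[\det(x,y)=c]=q^{-1}\sum_{s\in\mathbb F}\psi(s(\det(x,y)-c))$. The $s=0$ term is the rank-one piece $(t/q)J$. Since $\det(x,y)=\langle x,Jy\rangle$ for a symplectic $J$, the $s\ne0$ terms have the form $\psi(s\langle x,Jy\rangle)$, i.e. $q\,\mathcal F_s$ for a scaled Fourier transform. Summing over $c\in T$ produces Gauss-type sums $g(s)=\sum_{c\in T}\psi(-sc)$, which are Gauss periods of size $O(\sqrt q)$ in typical cases but can be as large as $t$.

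I therefore expect a cleaner route: compute $\tilde A^2$ from the codegree count. For linearly independent $x,y$ the number of common neighbours is exactly $t^2$ vectors. For $y=\lambda x$ it is $qt$ times the indicator that $\lambda\in T$, and $0$ otherwise. This gives
\[
\tilde A^2=t^2\bigl(\mathbf J-K\bigr)+qt\,K_T+(\text{diagonal } qt),
\]
where $K$ indicates pairs on a common line through $0$ and $K_T$ indicates pairs with ratio in $T$. These matrices commute and are simultaneously diagonalizable. Hence $\tilde A^2$ has eigenvalues in $\{$large, $\approx qt$ on $T$-invariant line functions, $\le t^2$-ish$\}$. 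Passing to the $T$-invariant subspace and dividing by the orbit size yields $\lambda(A_{q,t})^2\in\{\deg^2,\ q,\ \dots\}$ up to $O(1)$ shifts, i.e. $|\lambda|\le\sqrt q+1$ for the nontrivial eigenvalues. This is Füredi's computation.

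I expect the symmetry convention and the exact bookkeeping in $A_{q,t}^2=(\text{codegree matrix})+\mathrm{diag}$ to be the delicate step. The fallback is to verify directly that $A_{q,t}^2-qI$ is a combination of $J$ and a matrix with entries in $\{0,1,\dots,t\}$ supported on "same-line" orbit classes. Those classes form blocks of size $(q-1)/t$ on which the matrix equals $J_{\text{block}}$ minus something. From this, every eigenvalue orthogonal to $\mathbf 1$ satisfies $\lambda^2\le q+\text{(block eigenvalue)}$, which gives $\lambda\ge-\sqrt q-1$.
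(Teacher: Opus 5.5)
Your strategy is the paper's: an orbit graph on $\mathbb F_q^2\setminus\{0\}$, codegree from the nonsingular $2\times 2$ system, and the spectrum from squaring the adjacency matrix via exact common-neighbour counts. As written, however, two steps do not go through.

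\textbf{The construction.} The determinant relation is symmetric only when $-1\in T$. So it fails for $q$ odd and $t$ odd, in particular for $t=1$, which is exactly the case the paper uses for full navigability. Your first patch does not work either way:
\begin{itemize}
\item If vertices remain $T$-orbits and adjacency is $\det(x,y)\in T\cup(-T)$, then $[x]\neq[-x]$ are distinct vertices with identical neighbourhoods of size $2q$, so the codegree is $2q$.
\item If you pass to $(T\cup -T)$-orbits, you get F\"uredi's graph with parameter $2t$ on only $(q^2-1)/(2t)$ vertices.
\end{itemize}
You have to commit to a nondegenerate symmetric form $B$; the paper uses $ax+by\in H$, and your $x_1y_2+x_2y_1$ works equally well. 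Then loops ($B(x,x)\in T$) genuinely occur. Your degree count allows for them, but the spectral argument must be run on the looped orbit matrix $\widetilde A$. This matrix is exactly $q$-regular, and $A_{q,t}=\widetilde A-D$ for a diagonal $0$--$1$ matrix $D$.

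The ``$-1$'' in $-\sqrt q-1$ is precisely this loop deletion: $\lambda_{\min}(\widetilde A-D)\geq\lambda_{\min}(\widetilde A)-1$ by Weyl. It is not an ``$O(1)$ shift'' in the squared identity. For the same reason, your fallback claim about $A_{q,t}^2-qI$ is false for the simple matrix, since deleting loops changes the diagonal and some off-diagonal entries of the square and destroys regularity. (When $-1\in T$ your determinant graph is loop-free and $q$-regular, and the computation below even gives $-\sqrt q$.)

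\textbf{The eigenvalue step.} This is never actually carried out, and the fallback has the wrong sign. Write $\hat A$ for your lifted matrix on vectors. Your count $\hat A^2=t^2(\mathbf J-K)+qt\,K_T$ is correct, with $K$ and $K_T$ both including the diagonal. On $T$-invariant functions:
\begin{itemize}
\item $\hat A$ acts as $t\widetilde A$ and $\mathbf J$ as $tJ$;
\item $K$ acts as $tC$, where $C$ is the block-diagonal all-ones matrix over the $q+1$ lines, with blocks of size $s=(q-1)/t$;
\item $K_T$ acts as $tI$.
\end{itemize}
Dividing by $t^2$ gives the paper's identity $\widetilde A^2=qI+t(J-C)$. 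This matrix has eigenvalue $q^2$ on $\mathbf 1$, eigenvalue $q-ts=1$ on functions constant on lines with total sum zero, and eigenvalue $q$ on functions summing to zero on each line. The $q^2$-eigenspace is spanned by $\mathbf 1$, which carries the Perron eigenvalue $q$. Hence every other eigenvalue of $\widetilde A$ satisfies $\lambda^2\in\{1,q\}$, and $\lambda_{\min}(\widetilde A)\geq-\sqrt q$.

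Note that the same-line term enters with a minus sign: $\lambda^2=q-t\mu$ with $\mu\in\{0,s\}$. An inequality $\lambda^2\leq q+(\text{block eigenvalue})$ with a nonnegative block eigenvalue only yields $\lambda^2\leq 2q-1$ when $t=1$. That exceeds $(\sqrt q+1)^2$ for $q\geq 8$, so it does not give the claimed bound.

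With the symmetric form, this identity, and the Weyl step, your outline becomes the paper's proof.
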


\begin{proof}
This is F\"uredi's construction~\cite{furedi1996asymptotics}, we include the
short argument to keep track of its finite parameters.  Let
$\mathbb F=\mathbb F_q$ and let $H\leq\mathbb F^\times$ be the subgroup of
order $t$.  On $\mathbb F^2\setminus\{(0,0)\}$ identify
\[
                    (a,b)\sim(ha,hb)\qquad(h\in H),
\]
and write $\langle a,b\rangle$ for an equivalence class.  Every class has
exactly $t$ representatives, so there are $(q^2-1)/t$ vertices.  Two distinct
classes $\langle a,b\rangle$ and $\langle x,y\rangle$ are adjacent when
\[
                              ax+by\in H.
\]
The condition is independent of the chosen representatives.

Fix $\langle a,b\rangle$.  For each $h\in H$, the equation
$ax+by=h$ is an affine line containing $q$ points.  Taking all $h\in H$ gives
$qt$ representative solutions, and every equivalence class among them occurs
with all its $t$ representatives.  Thus there are $q$ candidate neighboring
classes, at most one is the vertex itself and is omitted because the graph is
simple.  Hence every degree is at least $q-1$, which gives the asserted edge
count.

Now take distinct vertices $\langle a_1,b_1\rangle$ and
$\langle a_2,b_2\rangle$.  A common neighbor $\langle x,y\rangle$ yields
$h_1,h_2\in H$ with
\[
 a_1x+b_1y=h_1,\qquad a_2x+b_2y=h_2.
\]
The coefficient matrix is nonsingular.  Indeed, if
$(a_1,b_1)=c(a_2,b_2)$, then the two displayed equations imply
$c=h_1h_2^{-1}\in H$, making the two original classes equal.  Therefore every
ordered pair $(h_1,h_2)\in H^2$ determines at most one solution $(x,y)$.
There are $t^2$ such pairs.  Multiplying a solution by any $g\in H$ multiplies
$(h_1,h_2)$ by the same $g$, so each common-neighbor class accounts for $t$
of these pairs.  There are consequently at most $t$ common neighbors.

For the spectral statement, let $\widetilde A$ be the adjacency matrix of the
same relation with possible diagonal entries retained, thus
\[
 \widetilde A_{\langle a,b\rangle,\langle x,y\rangle}=1
 \quad\Longleftrightarrow\quad
 ax+by\in H.
\]
Every row of $\widetilde A$ has exactly $q$ ones.  Partition the vertex set
according to the one-dimensional $\mathbb F$-subspace spanned by a
representative.  There are $q+1$ parts, each of size
\[
                              s=\frac{q-1}{t}.
\]
Let $C$ be the block-diagonal matrix whose $q+1$ diagonal blocks are the
$s\times s$ all-ones matrix, and let $J$ be the all-ones matrix of order
$(q^2-1)/t$.  The same two-equation count used above gives the exact identity
\begin{equation}
                          \widetilde A^2=qI+t(J-C).
 \label{eq:furedi-square-identity}
\end{equation}
Indeed, the diagonal entries are $q$, two distinct classes in the same
one-dimensional subspace have no common neighbor, and two classes in
different subspaces have exactly $t$ common neighbors.

The matrix $J-C$ has eigenvalue $qs=q(q-1)/t$ on the all-ones vector,
eigenvalue $-s$ on the $q$-dimensional space of vectors that are constant on
each part and sum to zero globally, and eigenvalue zero on the space of
vectors summing to zero inside every part.  Hence $\widetilde A^2$ has
eigenvalues $q^2$, $1$, and $q$.  Since
$\widetilde A\mathbf 1=q\mathbf 1$, every eigenvalue of $\widetilde A$ other
than the Perron eigenvalue $q$ has magnitude at most $\sqrt q$.  In
particular,
\[
                         \lambda_{\min}(\widetilde A)\geq-\sqrt q.
\]

The simple adjacency matrix is $A_{q,t}=\widetilde A-D$, where $D$ is a
diagonal $0$--$1$ matrix recording the possible loops.  Since
$0\preceq D\preceq I$, Weyl's inequality gives
\[
 \lambda_{\min}(A_{q,t})
 \geq\lambda_{\min}(\widetilde A)-1
 \geq-\sqrt q-1,
\]
as claimed.
\end{proof}

\begin{proof}[Proof of Theorem~\ref{thm:furedi-main}]
Apply Lemma~\ref{lem:spectral-compression} to $F(q,t)$ with
$L=\sqrt q+1$.  Lemma~\ref{lem:furedi-graph} gives
\[
                         D=O((\sqrt q+1)^2\log N)
                          =O(q\log N),
\]
and the resulting common-radius balls are exactly the closed neighborhoods
of $F(q,t)$.  Every such neighborhood has at least $q$ points, because the
minimum graph degree is at least $q-1$.  Moreover, for distinct $u,v$,
\[
                         |N[u]\cap N[v]|\leq t+2.
\]
Corollary~\ref{cor:local-common-ball} therefore yields, at every source,
\[
 \outdeg(u)
 \geq
 \frac{\bigl(q-1-\eps(N-1)\bigr)_+}{t+2},
\]
which is~\eqref{eq:furedi-sourcewise}.
\end{proof}

\begin{lemma}[Uniform logarithmic-codegree Zarankiewicz lower bound]
\label{lem:uniform-zar-lower}
For every fixed $B>0$ there are constants $c>0$ and $n_B$ such that, for
$n\geq n_B$ and every integer $1\leq t\leq B\log n$,
\[
                         \Zar(n,t)\geq c\sqrt t\,n^{3/2}.
\]
Moreover, the witnessing $n$-vertex graph has a generic common-radius
Euclidean realization in dimension
\[
                          O(\sqrt{nt}\log n).
\]
The constant $c$ can be chosen absolute, only $n_B$ depends on $B$.
\end{lemma}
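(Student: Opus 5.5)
The plan is to take Füredi's graph $F(q,t)$ from Lemma~\ref{lem:furedi-graph} at a prime $q\equiv1\pmod t$ with $N=(q^2-1)/t$ slightly at most $n$, and then pad it with isolated vertices to reach exactly $n$ vertices.

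\textbf{Choosing the prime.} Let $x=\sqrt{nt}$. By the uniform Siegel--Walfisz estimate cited in the main text~\cite[Corollary~11.21]{montgomeryvaughan2007}, uniformly for moduli $t\leq B\log n\leq(\log x)^{A}$, the number of primes $q\equiv1\pmod t$ in $(x/2,x]$ is
\[
(1+o(1))\frac{x}{2\varphi(t)\log x}>0
\]
once $n\geq n_B$. Only the threshold depends on $B$ here, because the Siegel--Walfisz error term is uniform for $t$ up to any fixed power of $\log x$. Pick such a $q$. Then $q=\Theta(\sqrt{nt})$, $t\mid(q-1)$, and
\[
N=\frac{q^2-1}{t}\leq\frac{x^2}{t}=n .
\]
Lemma~\ref{lem:furedi-graph} gives $|E(F(q,t))|\geq (q^2-1)(q-1)/(2t)$ and $\codeg\leq t$. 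Since $q>x/2$, this is at least
\[
\frac{q^3}{4t}\geq\frac{(nt)^{3/2}}{32t}=\frac{\sqrt t\,n^{3/2}}{32}
\]
for large $n$. Padding $F(q,t)$ with $n-N$ isolated vertices keeps the codegree at most $t$. Hence $\Zar(n,t)\geq c\sqrt t\,n^{3/2}$ with an absolute $c$, for instance $c=1/40$.

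\textbf{Realizing the padded graph.} The padded graph $H'$ has adjacency matrix $A_{q,t}\oplus 0$. Its least eigenvalue is at least $\min\{-\sqrt q-1,0\}=-\sqrt q-1$. So Lemma~\ref{lem:spectral-compression} applies directly with $L=\sqrt q+1\geq2$ and $m=n$. It gives a generic common-radius realization of all closed neighborhoods of $H'$ in dimension
\[
O(L^2\log n)=O(q\log n)=O(\sqrt{nt}\log n),
\]
as claimed. In that Gram construction, isolated vertices are at squared distance $2L$ from every other point, which lies in the nonedge band. So their closed neighborhoods are singletons, as required.

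\textbf{Main obstacle.} The only real difficulty is the prime-selection step. It needs a prime $q\equiv1\pmod t$ in a constant-factor window around $\sqrt{nt}$, uniformly for all $t\leq B\log n$, with absolute constants in the final bound. Siegel--Walfisz gives exactly this, at the price of an ineffective threshold $n_B$, which is why only $n_B$ depends on $B$. Everything else is an assembly of Lemma~\ref{lem:furedi-graph} and Lemma~\ref{lem:spectral-compression}.
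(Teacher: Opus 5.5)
Your proposal is correct and follows essentially the same route as the paper. Both arguments take a Siegel--Walfisz prime $q\equiv 1\pmod t$ in a dyadic window at scale $\sqrt{nt}$, pad F\"uredi's graph $F(q,t)$ with isolated vertices, and apply Lemma~\ref{lem:spectral-compression} with $L=\sqrt q+1$, using that padding only adds zero eigenvalues. The differences are cosmetic: your window is $(\sqrt{nt}/2,\sqrt{nt}]$ rather than $(x,2x]$ with $x=\tfrac12\sqrt{nt}$, you cover $t=1$ with the same estimate instead of Bertrand's postulate, and your tighter edge count gives constant $1/32$ versus the paper's $1/64$.
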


\begin{proof}
Put
\[
                              x=\frac12\sqrt{nt}.
\]
We claim that, for all sufficiently large $n$ uniformly in
$1\leq t\leq B\log n$, there is a prime
\begin{equation}
                        x<q\leq2x,\qquad q\equiv1\pmod t.
\label{eq:prime-dyadic-ap}
\end{equation}
For $t=1$ this follows, for example, from Bertrand's postulate.  Suppose
$t\geq2$.  Since $\log x=\Theta(\log n)$ uniformly in the stated range, for
large enough $n$ we have $t\leq(\log x)^2$.  A standard Siegel-Walfisz form
of the prime number theorem in arithmetic progressions states that, for every
fixed $A>0$, uniformly for $m\leq(\log y)^A$ and $(a,m)=1$,
\begin{equation}
 \vartheta(y,m,a)=\frac{y}{\varphi(m)}
            +O_A\!\left(y\exp(-c_A\sqrt{\log y})\right),
\label{eq:siegel-walfisz-form}
\end{equation}
where $\vartheta(y,m,a)=\sum_{p\leq y,\ p\equiv a\, (\mathrm{mod}\,m)}\log p$
and $c_A>0$ is a constant~\cite[Corollary~11.21]{montgomeryvaughan2007}.
Apply this with $A=2$ at $y=x$ and $y=2x$.  Since
$t\leq B\log n=O_B(\log x)$,
\[
 \vartheta(2x,t,1)-\vartheta(x,t,1)
 =\frac{x}{\varphi(t)}+O\!\left(x\exp(-c\sqrt{\log x})\right)>0
\]
for all sufficiently large $n$, uniformly in the stated range of $t$.  Hence
there is a prime $q\equiv1\pmod t$ in $(x,2x]$, which
proves~\eqref{eq:prime-dyadic-ap}.

For this prime $q$, Lemma~\ref{lem:furedi-graph} applies because
$t\mid(q-1)$.  Moreover $q\leq2x=\sqrt{nt}$, so
\[
              \frac{q^2-1}{t}<n.
\]
Pad $F(q,t)$ with isolated vertices to obtain an $n$-vertex graph, neither its
edge count nor its maximum codegree increases.  Since $q>x$ and $q\to\infty$
uniformly, for sufficiently large $n$,
\[
 |E(F(q,t))|
 \geq\frac{(q^2-1)(q-1)}{2t}
 \geq\frac{q^3}{8t}
 \geq\frac1{64}\sqrt t\,n^{3/2}.
\]
Padding by isolated vertices only appends zero adjacency eigenvalues and
therefore preserves the lower bound from Lemma~\ref{lem:furedi-graph}: the
padded graph has least eigenvalue at least $-\sqrt q-1$.  Lemma~\ref{lem:spectral-compression}
gives a generic common-radius realization in dimension
\[
 O(q\log n)=O(\sqrt{nt}\log n),
\]
because $q\leq\sqrt{nt}$.  Thus the conclusion holds, for example, with
$c=1/64$, together with the asserted dimension bound.
\end{proof}

\begin{proof}[Proof of Corollary~\ref{cor:complete-phase}]
The upper bound is~\eqref{eq:known-upper}.  Let $C_0$ be the constant in the
lower endpoint of Theorem~\ref{thm:sphere-main}, and decrease $\eps_0$ if
necessary so that the sphere theorem applies throughout
$[C_0\sqrt{\log n/n},\eps_0]$.

First suppose
\[
                       \eps\geq C_0\sqrt{\frac{\log n}{n}}.
\]
Theorem~\ref{thm:sphere-main} gives
$\AN(n,\eps)=\Omega(1/\eps)$ on a sample that is generic with probability
one, in dimension $O(\log n\log^2(1/\eps))$.  In this range
$1/\eps\leq\sqrt n$ for all sufficiently large $n$, so this is the desired
$\Omega(\min\{1/\eps,\sqrt n\})$ bound.

It remains to consider
\[
                       0\leq\eps<C_0\sqrt{\frac{\log n}{n}}.
\]
Let $c_0>0$ be the constant from Lemma~\ref{lem:uniform-zar-lower}, and choose
a fixed $K\geq c_0^{-2}$.  Put
\[
                         t=\max\{1,\lceil K\eps^2n\rceil\}.
\]
Then $t\leq1+KC_0^2\log n$, so the lemma applies (after increasing its fixed
constant $B$).  Choose its witnessing graph $H$, so
\[
                         |E(H)|\geq c_0\sqrt t\,n^{3/2},
                         \qquad \codeg(H)\leq t,
\]
and let $P_H$ be the supplied common-radius realization.
Because $t\geq K\eps^2n$,
\[
                     \eps n^2\leq c_0\sqrt t\,n^{3/2}.
\]
Applying global charging directly to these balls and using $t+2\leq3t$ gives,
for every almost navigable graph $G$ on $P_H$,
\[
 \frac{|E(G)|}{n}
 \geq
 \frac{2c_0\sqrt t\,n^{3/2}-\eps n(n-1)}{n(t+2)}
 \geq \frac{c_0}{3}\sqrt{\frac nt}.
\]
Finally, $t\leq1+K\eps^2n$.  Writing $z=\eps\sqrt n$, we have
\[
 \sqrt{\frac nt}
 \geq\frac{\sqrt n}{\sqrt{1+Kz^2}}
 \geq\frac1{\sqrt{K+1}}
       \min\left\{\sqrt n,\frac1\eps\right\},
\]
where the last inequality follows by considering $z\leq1$ and $z\geq1$
separately.  This proves the required lower bound.

Finally, the same lemma gives
\[
 D=O(\sqrt{nt}\log n)
  =O\!\left(\sqrt{n(1+K\eps^2n)}\,\log n\right)
  =O((\sqrt n+\eps n)\log n).
\]
Since $\eps<C_0\sqrt{\log n/n}$ in the present case, this is
$O(\sqrt n\log^{3/2}n)=o(n)$.  At $\eps=0$ we have $t=1$, and hence
$D=O(\sqrt n\log n)$.  Together with the sphere-regime dimension bound this
proves~\eqref{eq:phase-dimension}.
\end{proof}

\begin{corollary}[Sourcewise endpoint on exact F\"uredi sizes]
\label{cor:furedi-sourcewise-endpoint}
For every prime power $q$, with $N=q^2-1$, there is a generic $N$-point set
in $\R^{O(\sqrt N\log N)}$ such that every fully navigable graph satisfies
\[
                         \min_u\outdeg_G(u)\geq\frac{q-1}{3}
                         =\Omega(\sqrt N).
\]
More generally, the same conclusion up to a factor of two holds whenever
$\eps(N-1)\leq(q-1)/2$.
\end{corollary}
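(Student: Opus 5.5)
The plan is to specialize Theorem~\ref{thm:furedi-main} to $t=1$. Since $1\mid(q-1)$ for every prime power $q$, Lemma~\ref{lem:furedi-graph} supplies the graph $F(q,1)$. It has $N=q^2-1$ vertices, minimum degree at least $q-1$, codegree at most $1$, and least adjacency eigenvalue at least $-\sqrt q-1$.

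For the geometric realization, we apply Lemma~\ref{lem:spectral-compression} with $L=\sqrt q+1$. This gives a generic common-radius realization in dimension
\[
 O\bigl((\sqrt q+1)^2\log N\bigr)=O(q\log N).
\]
Since $q=\sqrt{N+1}=\Theta(\sqrt N)$, this is $O(\sqrt N\log N)$. The one point to check is that the Johnson--Lindenstrauss target dimension is stated in terms of $m=N$, which it is. For small $q$ the constants are absorbed. The case $q=2$ needs $L\geq2$, which holds since $\sqrt2+1\geq2$.

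For the degree bound, we invoke Corollary~\ref{cor:local-common-ball} with the realized balls $B_t=N[t]$.
\begin{itemize}
\item Each ball satisfies $|B_u|-1\geq q-1$.
\item Any two distinct balls satisfy $|B_u\cap B_v|\leq\codeg+2\leq 3$.
\end{itemize}
With $\eps=0$ this yields $\outdeg_G(u)\geq(q-1)/3$ at every source, which is \eqref{eq:furedi-sourcewise} with $t=1$. Since $q-1=\Theta(\sqrt N)$ for $q\geq2$, the bound is $\Omega(\sqrt N)$.

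For the general statement, suppose $\eps(N-1)\leq(q-1)/2$. Then the numerator in Corollary~\ref{cor:local-common-ball} satisfies
\[
 \bigl(q-1-\eps(N-1)\bigr)_+\geq\frac{q-1}{2}.
\]
So every source has out-degree at least $(q-1)/6$, which is the same conclusion up to a factor of two.

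There is no real obstacle, since this is a direct instantiation of earlier results. The only mild subtlety is the constant in the intersection bound. If one uses the sharper fact that $u\in N[v]$ and $v\in N[u]$ cannot both add beyond the common neighbours simultaneously, the $+2$ could be tightened. However, the stated constant $3$ already follows from the cruder bound $\codeg+2$, so I would not pursue this.
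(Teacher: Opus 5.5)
Your proposal is correct and matches the paper, whose proof simply applies Theorem~\ref{thm:furedi-main} with $t=1$; you have unpacked that theorem (spectral compression with $L=\sqrt q+1$, then Corollary~\ref{cor:local-common-ball} with $|B_u|-1\geq q-1$ and $|B_u\cap B_v|\leq 3$). Your closing aside is slightly off: adjacent $u,v$ always satisfy $|N[u]\cap N[v]|=|N(u)\cap N(v)|+2$, so the $+2$ cannot be tightened in general, but this plays no role in the argument.
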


\begin{proof}
Apply Theorem~\ref{thm:furedi-main} with $t=1$.
\end{proof}

\subsection{Additional sourcewise constructions from projective incidence geometry}
\label{app:projective-family-proof}

Although projective incidence geometry is no longer needed to close the
worst-case asymptotics, the same charging framework yields a useful symmetric
family in which the bound applies uniformly at every source.  The case $m=2$
is the projective-plane specialization.  More generally, let $PG(m,q)$ be projective $m$-space over $\mathbb F_q$, and
consider its bipartite point--hyperplane incidence graph.  Standard
finite-projective-space counts~\cite{dembowski1968finite} give the following
parameters.  Each side has
\[
v=1+q+\cdots+q^m
\]
vertices, every vertex has degree
\[
k=1+q+\cdots+q^{m-1},
\]
and two distinct vertices on the same side have exactly
\[
\lambda=1+q+\cdots+q^{m-2}
\]
common neighbors.  For a point--hyperplane pair, the corresponding closed
neighborhoods intersect in at most the two endpoints.

Thus every source has $k$ incidence neighbors, while the overlap relevant
to a single witnessing edge is at most $\max\{\lambda,2\}$.  Combining
these exact incidence parameters with the local charging bound and the
Euclidean realization yields the following sourcewise family.

\begin{theorem}[Projective incidence sourcewise obstruction]
\label{thm:projective-family}
For every prime power $q$ and integer $m\geq2$, let
\[
 N=2(1+q+\cdots+q^m),\qquad
 k=1+q+\cdots+q^{m-1},\qquad
 \lambda=1+q+\cdots+q^{m-2}.
\]
There is a generic Euclidean $N$-point set such that, whenever
$\eps(N-1)\leq k/2$, every $(1-\eps)$-almost navigable graph satisfies
\[
 \min_u\outdeg_G(u)\geq\frac{k}{2\max\{\lambda,2\}}=\Omega(q).
\]
In particular, taking $\eps=k/[4(N-1)]$ gives $\eps=\Theta(1/q)$ and
minimum out-degree $\Omega(1/\eps)$.
\end{theorem}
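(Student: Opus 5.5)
The plan is to realize the point--hyperplane incidence graph $H$ of $PG(m,q)$ as common-radius Euclidean balls and then apply Corollary~\ref{cor:local-common-ball} at every source.

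\emph{Realization.} First we apply the realization part of Theorem~\ref{thm:zarankiewicz-main} to $H$. It has $N=2v$ vertices, so this gives a generic set $P_H\subset\R^{N-1}$ and a radius $r$ with $B_t=N_H[t]=\{x:\|x-t\|_2\le r\}$ for every vertex $t$. If a lower-dimensional witness is wanted, we would instead use Lemma~\ref{lem:spectral-compression}. The incidence graph is bipartite with eigenvalues $\pm k$ and $\pm\sqrt{k-\lambda}=\pm q^{(m-1)/2}$, so the shifted matrix $A_H+kI$ is positive semidefinite. However, the statement claims no dimension bound, so the $\R^{N-1}$ realization suffices.

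\emph{Counting.} Every closed neighborhood has $|B_u|-1=k$. Next we bound $|B_u\cap B_v|$ for $u\neq v$:
\begin{itemize}
\item If $u,v$ lie on the same side, they are nonadjacent in the bipartite graph. So $B_u\cap B_v$ is exactly the set of common neighbors, which has size $\lambda$.
\item If $u,v$ lie on opposite sides, their open neighborhoods lie on opposite sides and are disjoint. Hence $B_u\cap B_v\subseteq\{u,v\}$, which has size at most $2$.
\end{itemize}
Therefore $\max_{v\ne u}|B_u\cap B_v|\le\max\{\lambda,2\}$.

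\emph{Charging.} Now we plug into \eqref{eq:local-common-ball-main}. The hypothesis $\eps(N-1)\le k/2$ gives $|B_u|-1-\eps(N-1)\ge k/2$. Since the denominator is at least one, this yields
\[
\outdeg_G(u)\ge \frac{k}{2\max\{\lambda,2\}}.
\]
It remains to check that this is $\Omega(q)$.
\begin{itemize}
\item For $m\ge3$, $k=1+q\lambda$, so $k/\lambda\ge q$.
\item For $m=2$, $\lambda=1$ and $k=q+1$, so the ratio is $(q+1)/4$.
\end{itemize}

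\emph{Parameter choice.} Finally, set $\eps=k/[4(N-1)]$. This satisfies the hypothesis. Since $N-1=\Theta(q^m)$ and $k=\Theta(q^{m-1})$, we get $\eps=\Theta(1/q)$, and the bound $\Omega(q)$ is $\Omega(1/\eps)$.

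The only step needing real care is the incidence parameters. Two distinct points lie on exactly $\lambda$ common hyperplanes, which are the hyperplanes through a line, and dually two hyperplanes share $\lambda$ points. These are the standard counts from \cite{dembowski1968finite}. Since the paper states them, we would cite them rather than rederive them.
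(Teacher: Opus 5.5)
Your proposal is correct and follows the paper's proof step for step: you realize the incidence graph via Theorem~\ref{thm:zarankiewicz-main}, use $|B_u|-1=k$ and the same-side/opposite-side intersection bound $\max\{\lambda,2\}$, and then apply Corollary~\ref{cor:local-common-ball}. One aside is wrong, though it is harmless because you do not use it: $H$ is bipartite, so $-k$ is an eigenvalue, which forces $L\ge k$ in Lemma~\ref{lem:spectral-compression}; the resulting dimension $O(k^2\log N)$ is no better than $N-1$, so that lemma does not give a lower-dimensional witness here.
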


\begin{proof}
Let $H$ be the bipartite point--hyperplane incidence graph of
$PG(m,q)$. Recall that each side has
\[
v=1+q+\cdots+q^m
\]
vertices, so
\[
N:=|V(H)|=2v=2(1+q+\cdots+q^m).
\]
Every vertex has degree
\[
k=1+q+\cdots+q^{m-1},
\]
and two distinct vertices on the same side have exactly
\[
\lambda=1+q+\cdots+q^{m-2}
\]
common neighbors.

Apply the realization statement in Theorem~\ref{thm:zarankiewicz-main} to $H$. We obtain a generic point set
$P_H\subset\mathbb R^{N-1}$ and a common radius $r>0$ such that,
for every $t\in P_H$,
\[
B_t:=\{x\in P_H:d(x,t)\le r\}=N_H[t].
\]
In particular, since $H$ is $k$-regular,
\[
|B_u|-1=k
\]
for every $u\in P_H$.

We next bound the pairwise intersections of these balls. Let
$u,v\in V(H)$ be distinct. If $u$ and $v$ lie on the same side of
the bipartition, then neither endpoint lies in the other's closed
neighborhood, and therefore
\[
|N_H[u]\cap N_H[v]|=\lambda.
\]
If $u$ and $v$ lie on opposite sides and are incident, then
\[
N_H[u]\cap N_H[v]=\{u,v\},
\]
while if they are nonincident their closed neighborhoods are
disjoint. Hence, for every source $u$,
\[
\max_{v\ne u}|B_u\cap B_v|
\le \max\{\lambda,2\}.
\]

Now let $G$ be any $(1-\varepsilon)$-almost navigable directed graph
on $P_H$. By Corollary~\ref{cor:local-common-ball}, for every source $u$,
\[
\deg_G^+(u)
\ge
\frac{|B_u|-1-\varepsilon(N-1)}
{\max\{1,\max_{v\ne u}|B_u\cap B_v|\}}.
\]
Using the bounds above gives
\[
\deg_G^+(u)
\ge
\frac{k-\varepsilon(N-1)}
{\max\{\lambda,2\}}.
\]
Under the hypothesis
\[
\varepsilon(N-1)\le \frac{k}{2},
\]
we conclude that
\[
\deg_G^+(u)
\ge
\frac{k}{2\max\{\lambda,2\}}
\]
for every $u\in P_H$. Thus
\[
\min_u \deg_G^+(u)
\ge
\frac{k}{2\max\{\lambda,2\}}.
\]

Finally, since $m\ge 2$,
\[
k=1+q+\cdots+q^{m-1}=\Theta(q^{m-1})
\]
and
\[
\lambda=1+q+\cdots+q^{m-2}=\Theta(q^{m-2}),
\]
so
\[
\frac{k}{\max\{\lambda,2\}}=\Theta(q).
\]
Therefore
\[
\min_u\deg_G^+(u)=\Omega(q).
\]

In particular, if
\[
\varepsilon=\frac{k}{4(N-1)},
\]
then $\varepsilon(N-1)=k/4\le k/2$, and since
\[
\frac{k}{N-1}=\Theta(1/q),
\]
we have
\[
\varepsilon=\Theta(1/q).
\]
Consequently,
\[
\min_u\deg_G^+(u)=\Omega(q)=\Omega(1/\varepsilon).
\]
\end{proof}

\begin{corollary}[Projective-plane endpoint]
\label{cor:projective-plane}
For every prime power $q$, with $N=2(q^2+q+1)$, there is a generic Euclidean
$N$-point set such that every fully navigable graph satisfies
\[
                         \min_u\outdeg_G(u)=\Omega(\sqrt N).
\]
\end{corollary}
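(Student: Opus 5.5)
The plan is to specialize Theorem~\ref{thm:projective-family} to $m=2$ and $\eps=0$. For $m=2$ the parameters become
\[
N=2(q^2+q+1),\qquad k=q+1,\qquad \lambda=1 .
\]
Full navigability is the case $\eps=0$. Then the hypothesis $\eps(N-1)\leq k/2$ holds trivially. The theorem supplies a generic Euclidean $N$-point set, namely the common-ball realization of the point--line incidence graph of $PG(2,q)$ given by Theorem~\ref{thm:zarankiewicz-main}.

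On this set, every navigable graph satisfies
\[
\min_u\outdeg_G(u)\geq\frac{k}{2\max\{\lambda,2\}}=\frac{q+1}{4}.
\]
To finish, I will compare with $N$. Since $N=2(q^2+q+1)\leq 6q^2$ for $q\geq1$, we get $q\geq\sqrt{N/6}$. Hence $(q+1)/4=\Omega(\sqrt N)$.

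Alternatively, with $\eps=0$ one can apply Corollary~\ref{cor:local-common-ball} directly. This avoids the factor $1/2$ and gives the slightly better bound $(q+1)/2$.

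I expect no real obstacle here. The only point to check is the overlap bound, and it is already verified in the proof of Theorem~\ref{thm:projective-family}:
\begin{itemize}[leftmargin=*]
\item two points, or two lines, share exactly $\lambda=1$ common neighbor, and neither lies in the other's closed neighborhood;
\item an incident point--line pair has closed neighborhoods meeting exactly in the two endpoints;
\item a nonincident point--line pair has disjoint closed neighborhoods.
\end{itemize}
So $\max_{v\neq u}|B_u\cap B_v|\leq 2$.
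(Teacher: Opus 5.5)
Your proposal is correct and matches the paper's argument: specialize Theorem~\ref{thm:projective-family} to $m=2$ and $\eps=0$, where $k=q+1$ and $\lambda=1$, to get $\min_u\outdeg_G(u)\geq (q+1)/4$, and then use $N\leq 6q^2$ to conclude $\Omega(\sqrt N)$. Your remark is also correct: applying Corollary~\ref{cor:local-common-ball} directly at $\eps=0$ drops the factor $1/2$ and gives the bound $(q+1)/2$.
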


\begin{proof}
Apply Theorem~\ref{thm:projective-family} with $m=2$ and $\eps=0$.
Then $k=q+1$ and $\lambda=1$.
\end{proof}

The incidence construction is classical, our corollary uses its symmetric
design identities as a sourcewise navigability obstruction.  The sizes and
error values are discrete, so its role is complementary to the uniform
average-degree phase theorem rather than necessary for interpolation.

\section*{AI Disclosure}
We used OpenAI GPT-5.6-Sol to generate the theoretical results and proofs in Sections 2–5 and Appendices A.1–A.6.
The tool materially affected the formulation of results and the development of proof arguments in these sections.
The human authors independently verified the correctness and originality of all claims, proofs, and other content, including all references, and take full responsibility for the correctness, originality, and integrity of the submission.

\end{document}